\begin{document}




\title{Mind the $\tilde{\mathcal{O}}$: asymptotically better, but still impractical, quantum distributed algorithms}

\author[1,2,3]{Phillip A. Kerger}

\author[2,3,4]{David E. Bernal Neira}

\author[2,3]{Zoe Gonzalez Izquierdo}

\author[3]{Eleanor G. Rieffel}

\affil[1]{Department of Applied Mathematics and Statistics, Johns Hopkins University}

\affil[2]{Research Institute of Advanced Computer Science, USRA}

\affil[3]{Quantum Artificial Intelligence Laboratory, NASA Ames Research Center}

\affil[4]{Davidson School of Chemical Engineering, Purdue University}

\maketitle

\begin{abstract}

The CONGEST and CONGEST-CLIQUE models have been carefully studied to represent situations where the communication bandwidth between processors in a network is severely limited. Messages of only $\mathcal{O}(\log(n))$ bits of information each may be sent between processors in each round. The quantum versions of these models allow the processors instead to communicate and compute with quantum bits under the same bandwidth limitations. This leads to the following natural research question: What problems can be solved more efficiently in these quantum models than in the classical ones? Building on existing work, we contribute to this question in two ways. Firstly, we present two algorithms in the Quantum CONGEST-CLIQUE model of distributed computation that succeed with high probability; one for producing an approximately optimal Steiner Tree, and one for producing an exact directed minimum spanning tree, each of which uses $\tilde{\mathcal{O}}(n^{1/4})$ rounds of communication and $\tilde{\mathcal{O}}(n^{9/4})$ messages, where $n$ is the number of nodes in the network. The algorithms thus achieve a lower asymptotic round and message complexity than any known algorithms in the classical CONGEST-CLIQUE model. At a high level, we achieve these results by combining classical algorithmic frameworks with quantum subroutines. An existing framework for using a distributed version of Grover's search algorithm to accelerate triangle finding lies at the core of the asymptotic speedup. Secondly, we carefully characterize the constants and logarithmic factors involved in our algorithms as well as related algorithms, otherwise commonly obscured by $\tilde{O}$ notation. The analysis shows that some improvements are needed to render both our and existing related quantum and classical algorithms practical, as their asymptotic speedups only help for very large values of $n$.

\end{abstract}
\keywords{Quantum Computing, Distributed Computing, Steiner Tree, Directed Minimum Spanning Tree}

\section{Introduction}
The classical CONGEST-CLIQUE Model (cCCM henceforth) in distributed computing has been carefully studied as a model central to the field, e.g., \cite{TowardsCCM_Complexity_KorhonenSumoela2018, Saikia_Karmakar2019_SteinerTree_CONGESTCLIQUE, fischer2021_DMST,Lenzen2012_OptimalRoutingSortingInCongestClique, Dolev_lenzen_2012TriTA, MST_in_O1_CCM_Nowicki2019}. 
In this model, processors 
in a network solve a problem whose input is distributed across the nodes under significant communication limitations, described in detail in \S \ref{sec: background}. 
For example, a network of aircraft or spacecraft, satellites, and control stations, all with large distances between them, may have severely limited communication bandwidth to be modeled in such a way. 
The quantum version of this model, in which quantum bits can be sent between processors, the quantum CONGEST-CLIQUE Model (qCCM), as well as the quantum CONGEST model, have been the subject of recent research \cite{IzumiLeGallMag2019_APSP_QuantumDist, CensorHillel2022_QuantDistCliqueDetect, vanApeldoorn2022_DistQuantQueriesCONGEST, Elkin2014_NoQuantumSpeedups} in an effort to understand how quantum communication may help in these distributed computing frameworks. 
For the quantum CONGEST Model, however, \cite{Elkin2014_NoQuantumSpeedups} showed that many problems cannot be solved more quickly than in the classical model.
These include shortest paths, minimum spanning trees, Steiner trees, min-cut, and more; 
the computational advantages of quantum communication are thus severely limited in the CONGEST setting, though a notable positive result is sub-linear diameter computation in \cite{LeGall_Mag2018_Diameter_DistGrover}.
No comparable negative results exist for the qCCM, and in fact, \cite{IzumiLeGallMag2019_APSP_QuantumDist} provides an asymptotic quantum speedup for computing all-pairs shortest path (APSP henceforth) distances.
Hence, it is apparent that the negative results of \cite{Elkin2014_NoQuantumSpeedups} cannot transfer over to the qCCM, so investigating these problems in the qCCM presents an opportunity for contribution to the understanding of how quantum communication may help in these distributed computing frameworks.
In this paper, we contribute to this understanding by formulating algorithms in the qCCM for finding approximately optimal Steiner trees and exact directed minimum spanning trees using $\tilde{\mathcal{O}}(n^{1/4})$ rounds -- asymptotically fewer rounds than any known classical algorithms. 
This is done by augmenting the APSP algorithm of \cite{IzumiLeGallMag2019_APSP_QuantumDist} with an efficient routing table scheme, which is necessary to make use of the shortest {\em paths} information instead of only the APSP {\em distances}, and using the resulting subroutine with existing classical algorithmic frameworks.
Beyond asymptotics, we also characterize the complexity of our algorithms as well as those of \cite{IzumiLeGallMag2019_APSP_QuantumDist, CensorHillel2016_AlgebraicMethodsCCM_fast_APSP, Saikia_Karmakar2019_SteinerTree_CONGESTCLIQUE, fischer2021_DMST} to include the logarithmic and constant factors involved to estimate the scales at which they would be practical, which was not included in the previous work. 
It should be noted that, like APSP, these problems cannot see quantum speedups in the CONGEST (non-clique) setting as shown in \cite{Elkin2014_NoQuantumSpeedups}.
Our Steiner tree algorithm is approximate and based on a classical polynomial-time centralized algorithm of \cite{Kou1981_fast_algo_for_ST_}.
Our directed minimum spanning tree problem algorithm follows an approach similar to \cite{fischer2021_DMST}, which effectively has its centralized roots in \cite{lovasz1985_DMST}.

\section{Background and Setting}\label{sec: background}
This section provides the necessary background for our algorithms' settings and the problems they solve. 

\subsection{The CONGEST and CONGEST-CLIQUE Models of Distributed Computing}
In the standard CONGEST model, we consider a graph of $n$ processor nodes whose edges represent communication channels. Initially, each node knows only its neighbors in the graph and associated edge weights.
In rounds, each processor node executes computation locally and then communicates with its neighbors before executing further local computation.
The congestion limitation restricts this communication, with each node able to send only one message of $\mathcal{O}(\log(n))$ classical bits in each round to its neighbors, though the messages to each neighbor may differ.
Since there are $n$ nodes, assigning them ID labels $1, ..., n$ means the binary encoding size of a label is $\ceil{\log(n)}$ bits -- i.e., each message in CONGEST of $O(\log n)$ bits can contain roughly the amount of information to represent one node ID.
In the cCCM, we separate the communication graph from the problem input graph by allowing all nodes to communicate with each other, though the same $\mathcal{O}(\log(n))$ bits-per-message congestion limitation remains.
Hence, a processor node could send $n-1$ different messages to the other $n-1$ nodes in the graph, with a single node distributing up to $\mathcal{O}(n\cdot\log(n))$ bits of information in a single round.
Taking advantage of this way of dispersing information to the network is paramount in many efficient CONGEST-CLIQUE algorithms.
The efficiency of algorithms in these distributed models is commonly measured in terms of the \textit{round complexity}, the number of rounds of communication used in an algorithm to solve the problem in question. 
A good overview of these distributed models can be found in \cite{DistGraphAlgos_LectureNotes_Ghaffari}. 




\subsection{Quantum Versions of CONGEST and CONGEST-CLIQUE}\label{sec: intro - quantum versions of DC}
The quantum models we work in are obtained via the following modification:
Instead of restricting to messages of $\mathcal{O}(\log(n))$ classical bits, we allow messages to consist of $\mathcal{O}(\log(n))$ quantum bits, qubits. For background on qubits and the fundamentals of quantum computing, we refer the reader to \cite{RieffelGentleIntroductionToQuantum}. We formally define the qCCM, the setting for our algorithms, as follows: 
\begin{definition}[Quantum CONGEST-CLIQUE] \label{def: QCCM}
The Quantum CONGEST-CLIQUE Model (qCCM) is a distributed computation model in which an input graph $G = (V, E, W)$ is distributed over a network of $n$ processors, where each processor is represented by a node in $V$.
Each node is assigned a unique ID number in $[n]$. Time passes in \textit{rounds}, each of which consists of the following:
\begin{enumerate}
    \item Each node may execute unlimited local computation.
    \item Each node may send a message consisting of either a register of $\mathcal{O}(\log n)$ qubits or a string of $\mathcal{O}(\log n)$ classical bits to each other node in the network. Each of those messages may be distinct. 
    \item Each node receives and saves the messages the other nodes send it. 
\end{enumerate}
The input graph $G$ is distributed across the nodes as follows: 
Each node knows its own ID number, the ID numbers of its neighbors in $G$, the number of nodes $n$ in $G$, and the weights corresponding to the edges it is incident upon. The output solution to a problem must be given by having each node $v\in V$ return the restriction of the global output to $\mathcal{N}_{G}(u) := \{v: uv \in E\}$, its neighborhood in $G$. No entanglement is shared across nodes initially.

\end{definition}

This is an analog of the cCCM, except that quantum bits may be sent in place of classical bits.
To clarify the output requirement, in the Steiner tree problem, we require node $u$ to output the edges of the solution tree that are incident upon $u$. 
Since many messages in our algorithms need not be sent as qubits, we define the qCCM slightly unconventionally, allowing either quantum or classical bits to be sent.  We specify those that may be sent classically. 
However, even without this modification, the quantum versions of CONGEST and cCCM are at least as powerful as their classical counterparts. This is because any $n$-bit classical message can be instead sent as an $n$-qubit message of unentangled qubits; 
for a classical bit reading $0$ or $1$, we can send a qubit in the state $\ket{0}$ or $\ket{1}$ respectively, and then take measurements with respect to the $\{\ket{0}, \ket{1}\}$ basis to read the same message the classical bits would have communicated. 
Hence, one can also freely make use of existing classical algorithms in the qCCM.
Further, the assumption that IDs are in $[n]$, with $n$ known, is not necessary but is convenient; without this assumption, we could have all  nodes broadcast their IDs to the entire network and then assign a new label in $[n]$ to each node according to an ordering of the original IDs, resulting in our assumed situation. 

\begin{remark}\label{remark: information storage}
Definition \ref{def: QCCM} does not account for how the information needs to be stored. In this paper, it suffices for all information regarding the input graph to be stored classically as long as there is quantum access to that data. We provide some details on this in \S \ref{app:information access} of the appendix.
\end{remark}

\begin{remark}
No entanglement being shared across nodes initially in definition \ref{def: QCCM} results in quantum teleportation not being a trivial way to solve problems in the qCCM.
\end{remark}

\example \label{ex: grover example}
To provide some intuition on how allowing communication through qubits in this distributed setting can be helpful, we now describe and give an example of distributed Grover search, first described in \cite{LeGall_Mag2018_Diameter_DistGrover}.
The high-level intuition for why quantum computing gives an advantage for search is that quantum operations use quantum interference effects to have canceling effects among non-solutions.
Grover search has a generalization called ``amplitude amplification'' we will use; see \cite{RieffelGentleIntroductionToQuantum} for details on these algorithms.
Now, for a processor node $u$ in the network and a Boolean function $g:X\rightarrow\{0,1\}$, suppose there exists a classical procedure $\mathcal{C}$ in the cCCM that allows $u$ to compute $g(x)$, for any $x\in X$ in $r$ rounds.
The quantum speedup will come from computing $\mathcal{C}$ in a quantum superposition, which enables $g$ to be evaluated with inputs in superposition so that amplitude amplification can be used for inputs to $g$. Let $A_i: \{x\in X:g(x)=i\}$, for $i=0, 1$, and suppose that $0<|A_1|\leq|X|/2$. Then classically, node $u$ can find an $x\in A_1$ in $\Theta(r|X|)$ rounds by checking each element of $X$.
Using the quantum distributed Grover search of \cite{LeGall_Mag2018_Diameter_DistGrover} enables $u$ to find such an $x$ with high probability in only $\tilde{\mathcal{O}}(r\sqrt{|X|})$ rounds by evaluating the result of computing $g$ on a superposition of inputs. 

We illustrate this procedure in an example case where a node $u$ wants to inquire whether one of its edges $uv$ is part of a triangle in $G$. We first describe a classical procedure for this, followed by the corresponding quantum-distributed search version. For $v\in \mathcal{N}_G(u)$, denote by $\mathcal{I}_v:V\rightarrow \{0,1\}$ the indicator function of $\mathcal{N}_{G}(v)$, and by $g_{uv}:\mathcal{N}_G(u)\rightarrow\{0,1\}$ its restriction to inputs in $\mathcal{N}_G(u)$. 
Classically, node $u$ can evaluate $g_{uv}(w)$ in two rounds for any $w\in\mathcal{N}_G(u)$ by sending the ID of $w$ (of length $\log n$) to $v$, and having $v$ send back the answer $\mathcal{I}_v(w)$. Then $u$ can check $g_{uv}(w)$ for each $w\in\mathcal{N}_G(u)$ one at a time to determine whether $uv$ is part of a triangle in $G$ or not in $2\cdot|\mathcal{N}_G(u)|$ rounds. 
 
For the distributed quantum implementation, $u$ can instead initialize a register of $\log n$ qubits as $|\psi\rangle_0:= \frac{1}{\sqrt{|\mathcal{N}_G(u)|}}\sum_{x\in\mathcal{N}_G(u)}|x\rangle$, all the inputs for $g_{uv}$ in equal superposition.
To do a Grover search, $u$ needs to be able to evaluate $g_{uv}$ with inputs $|\psi\rangle$ in superposition. For the quantum implementation of $\mathcal{C}$, $u$ sends a quantum register in state $|\psi\rangle|0\rangle$ to node $v$, and has node $v$ evaluate a quantum implementation of $\mathcal{I}_v$, which we will consider as a call to an oracle mapping $|x\rangle|0\rangle$ to $|x\rangle|\mathcal{I}_v(x)\rangle$ for all $x\in V$.
Node $v$ sends back the resulting qubit register, and node $u$ has evaluated $g_{uv}(|\psi\rangle)$ in 2 rounds.
Now, since $u$ can evaluate $g_{uv}$ in superposition, node $u$ may proceed using standard amplitude amplification, using 2 rounds of communication for each evaluation of $g_{uv}$, so that $u$ can find an element $w\in \mathcal{N}_G(u)$ satisfying $g_{uv}(w)=1$ with high probability in $\tilde{\mathcal{O}}(r\sqrt{|\mathcal{N}_G(u)|})$ rounds if one exists. We note that in this example, $v$ cannot execute this procedure by itself since it does not know $\mathcal{N}_G(u)$ (and sending this information to $v$ would take $|\mathcal{N}_G(u)|$ rounds), though it is able to evaluate $\mathcal{I}_v$ in superposition for any $w\in\mathcal{N}_G(u)$. 
For any classical procedure $\mathcal{C}$ evaluating a different function from this specific $g$ (that can be implemented efficiently classically and, therefore, translated to an efficient quantum implementation), the same idea results in the square-root advantage to find a desired element such that $g$ evaluates to $1$.

\subsection{Notation and Problem Definitions}\label{sec:notation and definitions}
For an integer-weighted graph $G = (V, E, W)$,
we will denote $n := |V|, m := |E|,$ and $W_e$ the weight of an edge $e \in E$ throughout the paper.
Let $\delta(v) \subset V$ be the set of edges incident on node $v$, and $\mathcal{N}_{G}(u) := \{v: uv \in E\}$ the neighborhood of $u\in G$.
Denote by $d_G(u,v)$ the shortest-path distance in $G$ from $u$ to $v$.
For a graph $G = (V,E,W)$ two sets of nodes $U$ and $U'$,
let $\mathcal{P}_G(U, U'):= \{uv\in E: u\in U, w\in U'\}$ be the set of edges connecting $U$ to $U'$.
Let $\mathcal{P}(U) := \mathcal{P}(U, U)$ as shorthand. All logarithms will be taken with respect to base 2, unless otherwise stated.

\begin{definition}[Steiner Tree Problem]
Given a weighted, undirected graph $G = (V, E, W)$, and a set of nodes $\mathcal{Z}\subset V$, referred to as \textit{Steiner Terminals}, output the minimum weight tree in $G$ that contains $\mathcal{Z}$. 
\end{definition}
\begin{definition}[Approximate Steiner Tree]
For a Steiner Tree Problem with terminals $\mathcal{Z}$ and solution $\mathcal{S}_{OPT}$ with edge set $E_{\mathcal{S}_{OPT}}$, a tree $T$ in $G$ containing $\mathcal{Z}$ with edge set $E_T$ such that 
$$
\sum_{uv\in E_T}W_{uv} \leq r\cdot \sum_{uv\in E_{\mathcal{S}_{OPT}}}W_{uv}
$$
is called an approximate Steiner Tree with approximation factor $r$. 
\end{definition}

\begin{definition}[Directed Minimum Spanning Tree Problem (DMST)]\label{def:DMST}Given a directed, weighted graph $G = (V, E, W)$ and a root node $r \in V$, output the minimum weight directed tree $T^*$ in $G$ such that there exists a directed path in $T^*$ from $r$ to any other node of $G$. This is also known as the {\em minimum weight arborescence} problem.  
\end{definition}

\section{Contributions}
We provide an algorithm for the qCCM that produces an approximate Steiner Tree with high probability (w.h.p.) in $\tilde{\mathcal{O}}(n^{1/4})$ rounds and an algorithm that produces an exact Directed Minimum Spanning Tree w.h.p. in $\tilde{\mathcal{O}}(n^{1/4})$ rounds. To do this, we enhance the quantum APSP algorithm of \cite{IzumiLeGallMag2019_APSP_QuantumDist} in an efficient way to compute not only APSP distances but also the corresponding routing tables (described in \S\ref{sec: APSP and routing}) that our algorithms rely on.
Further, in addition to these $\tilde{\mathcal{O}}$ results, in sections \ref{sec:complexity}, \ref{sec: Steiner alg complexity and correctness}, and \ref{sec:dmst complexity}, we characterize the constants and logarithmic factors involved in our algorithms as well as related classical algorithms to contribute to the community's understanding of their implementability. This reveals that the factors commonly obscured by $\tilde{\mathcal{O}}$ notation in related literature, especially the logarithms, have a severe impact on practicality. 

We summarize the algorithmic results in the following two theorems: 
\begin{theorem} \label{thm: Steineralg}
There exists an algorithm in the Quantum CONGEST-CLIQUE model that, given an integer-weighted input graph $G = (V, E, W)$, outputs a $2(1-1/l)$ approximate Steiner Tree with probability of at least $1-\frac{1}{poly(n)}$, and uses $\tilde{\mathcal{O}}(n^{1/4})$ rounds of computation, where $l$ denotes the number of terminal leaf nodes in the optimal Steiner Tree.
\end{theorem}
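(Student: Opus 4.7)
The plan is to adapt the classical centralized approximate Steiner tree algorithm of Kou, Markowsky, and Berman (KMB, 1981), whose analysis already gives the precise $2(1-1/l)$ factor with the same parameter $l$, to the qCCM setting. Since the target approximation ratio matches KMB's exactly, the correctness portion of the theorem reduces to verifying that the distributed algorithm faithfully executes the same sequence of steps as KMB; the substance of the theorem is implementing each of those steps within $\tilde{\mathcal{O}}(n^{1/4})$ rounds with high probability.

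In order, I would carry out the following. First, run the quantum APSP routine of \cite{IzumiLeGallMag2019_APSP_QuantumDist} in $\tilde{\mathcal{O}}(n^{1/4})$ rounds to compute $d_G(u,v)$ for every pair $u,v \in V$, augmented so that each node $u$ also stores a routing table giving the next hop on some shortest $u$-to-$v$ path for every $v \in V$. Second, view the (virtual) metric closure $G_1$ on the terminal set $\mathcal{Z}$ whose edge weights are the APSP distances; these weights are already distributed across $\mathcal{Z}$ after step one. Third, compute an MST $T_1$ of $G_1$ using an efficient cCCM MST subroutine. Fourth, using the routing tables, replace each edge $uv \in T_1$ with an actual shortest $u$-to-$v$ path in $G$, producing a subgraph $G_s \subseteq G$. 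Fifth, compute an MST $T_s$ of $G_s$ and prune non-terminal leaves. Each node then outputs the edges of $T_s$ incident on it, matching the output convention of Definition~\ref{def: QCCM}.

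The main obstacle is the routing-table augmentation in the first step: \cite{IzumiLeGallMag2019_APSP_QuantumDist} returns only distances, and naively emitting full shortest paths could cost $\Omega(n)$ bits per source-destination pair, far exceeding the round budget. I would address this by keeping the routing information local: each node stores only the single next-hop neighbor for each destination, which is $\tilde{\mathcal{O}}(\log n)$ bits per destination and so fits the cCCM's ability to transmit $n-1$ distinct $\tilde{\mathcal{O}}(\log n)$-bit messages per round. Path expansion in the fourth step is then purely local: once an intermediate node learns that a particular shortest path was selected by $T_1$, it can contribute the appropriate incident edge to $G_s$ without ever materializing the full path anywhere. Ensuring that the augmentation slots into the existing quantum APSP framework without inflating the round count is the delicate part and will require a careful reuse of the quantum triangle-finding style subroutine already invoked for distances.

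The approximation guarantee then follows by quoting the KMB analysis: $T_s$ has total weight at most $2(1-1/l)$ times that of the optimal Steiner tree via the standard doubling-tour argument applied to the optimum, and pruning non-terminal leaves does not increase cost. The high-probability $1 - 1/\mathrm{poly}(n)$ bound is inherited from the quantum APSP subroutine, with any inverse-polynomial failure probability amplifiable by $\mathrm{poly}\log n$ independent repetitions without disturbing the $\tilde{\mathcal{O}}(n^{1/4})$ asymptotic. The MST computations in $G_1$ and $G_s$ are known to be solvable in poly-logarithmic rounds in the cCCM, and the local path expansion costs only $\tilde{\mathcal{O}}(1)$ additional broadcasting rounds, so the APSP step dominates and yields the overall $\tilde{\mathcal{O}}(n^{1/4})$ round bound.
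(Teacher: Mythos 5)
Your overall framing (quantum APSP plus routing tables, then a distributed implementation of a KMB-style construction, with correctness quoted from the classical analysis and the failure probability inherited from the APSP subroutine) matches the paper's, but you implement the \emph{original} KMB decomposition -- metric closure $G_1$ on $\mathcal{Z}$, MST $T_1$ of $G_1$, expansion of each $T_1$-edge into an actual shortest path, second MST, prune -- and the path-expansion step is a genuine gap. With only next-hop routing tables, an intermediate node does \emph{not} know that it lies on the particular shortest path selected for a tree edge $(z_i,z_j)$; the only information available locally is its own next hop toward each destination. Making it ``learn'' this requires tracing the path hop by hop from $z_i$ (each node forwarding a token to its next hop toward $z_j$), which costs a number of rounds proportional to the hop length of the path -- up to $\Theta(n)$ on a long path graph -- and doing many such traces simultaneously creates congestion at nodes shared by many paths. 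Broadcasting enough global information (full routing tables, or every node's distance to every terminal) to make the decision local also exceeds the $\tilde{\mathcal{O}}(n^{1/4})$ budget when $|\mathcal{Z}|=\Theta(n)$. So your claim that path expansion costs only $\tilde{\mathcal{O}}(1)$ additional rounds is unjustified as stated, and this step is exactly where your plan breaks.

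The paper avoids this obstacle by using a different (Mehlhorn/Saikia--Karmakar-style) realization of the same approximation: after APSP with routing tables, each node points to its parent on a shortest path toward its \emph{nearest} terminal, which builds a shortest-path forest in a single round; edge weights are then modified (tree edges $0$, intra-tree edges $\infty$, inter-tree edges $d(u,s(u))+W_{uv}+d(v,s(v))$), a single $\mathcal{O}(1)$-round MST is computed on the modified graph, and non-terminal leaves are pruned. No terminal-to-terminal path is ever materialized, because the shortest paths are encoded implicitly by the one-hop parent pointers of the forest, and correctness with the same $2(1-1/l)$ factor still follows from the Kou--Markowsky--Berman analysis. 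Separately, the routing-table augmentation you flag as ``delicate'' is resolved in the paper by computing witness matrices alongside each distance product (Zwick's entry-modification trick), which costs only a logarithmic factor of extra calls to the triangle-finding subroutine; your proposal correctly identifies that something like this is needed but does not supply the mechanism.
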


\begin{theorem} \label{thm: DMSTalg}
There exists an algorithm in the Quantum CONGEST-CLIQUE model that, given a directed and integer-weighted input graph $G = (V, E, W)$, produces an exact Directed Minimum Spanning Tree with high probability, of at least $1-\frac{1}{poly(n)}$, and uses $\tilde{\mathcal{O}}(n^{1/4})$ rounds of computation. 
\end{theorem}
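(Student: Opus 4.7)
The plan is to instantiate the classical DMST framework of Fischer and Oshman (with centralized roots in Lovász's approach) in the qCCM, replacing its bottleneck subroutines with the enhanced quantum APSP-plus-routing-tables primitive promised in the paper. At a high level, the centralized Chu--Liu--Edmonds / Lovász procedure iterates the following step: every non-root node tentatively picks its minimum-weight incoming edge, cycles in the resulting functional subgraph are identified and contracted into super-nodes, and edge weights into each contracted super-node are adjusted by subtracting the minimum incoming weight. The number of such contraction phases needed until no cycles remain (i.e.\ the tentative picks form an arborescence that can be uncontracted into a DMST) is $O(\log n)$. So the target reduces to executing one contraction phase in $\tilde{\mathcal{O}}(n^{1/4})$ rounds in the qCCM.

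First I would describe the data representation that is maintained between phases: every processor holds a classical list of the super-node it currently belongs to (identified by the ID of a canonical representative in $[n]$), together with, for each of its incident original edges, the current reduced weight and the current super-node endpoints. Initially each node is its own super-node. Within a phase, (i) each non-root super-node must learn a minimum reduced-weight incoming edge, (ii) the processors must detect the cycles formed by the chosen edges, (iii) they must choose canonical representatives for the new super-nodes and broadcast the contraction map so that every node can update its local edge list and reduce weights by subtracting the minimum incoming weight at each contracted super-node. Step (i) is a distributed minimum computation over edges whose endpoints map to a common super-node; using Lenzen's routing in the cCCM this can be done in $O(1)$ rounds per phase once the super-node labels are known. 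Step (iii) is likewise a routing / broadcasting task that the cCCM already supports in $O(1)$ rounds. The real content is step (ii): cycle detection in the functional graph induced by the chosen incoming edges.

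For cycle detection I would invoke the enhanced quantum APSP subroutine from Section \ref{sec: APSP and routing} on the functional digraph $F$ whose vertices are current super-nodes and whose edges are the chosen incoming edges (one per non-root super-node, weight $1$). Because each node has in-degree at most $1$ in $F$, connected components of $F$ are either trees rooted at the root or ``rho''-shapes containing a unique cycle, and a node $u$ lies on a cycle iff $u$ is reachable from itself in $F$, which the APSP distances decide directly; moreover the associated routing tables let every node on a cycle learn the full cyclic sequence and hence the canonical representative of its new super-node, after which contraction and weight reduction proceed by the $O(1)$-round routing steps above. Since the quantum APSP-plus-routing primitive runs in $\tilde{\mathcal{O}}(n^{1/4})$ rounds w.h.p., and it is invoked $O(\log n)$ times (absorbed into the $\tilde{\mathcal{O}}$), the total round count is $\tilde{\mathcal{O}}(n^{1/4})$. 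Correctness inherits from that of the Lovász / Fischer--Oshman framework: each phase strictly reduces the residual problem (cycles become super-nodes), the reduced weights preserve optimality, and uncontracting the final arborescence produces an exact DMST. Taking a union bound over the $O(\log n)$ phases keeps the failure probability at $1/\mathrm{poly}(n)$.

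The main obstacle I anticipate is step (ii) in a form that is both quantum-efficient and faithful to the Fischer--Oshman reduction: one must argue that the APSP-plus-routing primitive, which was originally designed for the input graph $G$, can be applied to the auxiliary functional graph $F$ over the current super-nodes without each processor having global knowledge of the contraction. The remedy I would use is that every processor can simulate an edge of $F$ locally once it knows its own super-node label and the chosen incoming edge of that super-node, and these two pieces of information are precisely what steps (i) and (iii) distribute in $O(1)$ rounds at the start and end of each phase; the APSP invocation then sees a legitimate distributed graph on the current super-nodes, padded by isolated singletons for vertices that are not canonical representatives so that the vertex set remains of size at most $n$. A secondary subtlety is bookkeeping the weight reductions across phases so that the final uncontracted tree reports the original weights; this is a purely classical accounting layer that each node maintains in its local memory and does not affect round complexity.
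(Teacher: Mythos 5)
There is a genuine gap, and it sits exactly at the step your round bound depends on. You attribute the $O(\log n)$ bound on the number of contraction phases to plain Chu--Liu/Edmonds-style contraction, in which each non-root (super-)node picks its minimum-weight incoming edge and only the cycles of the resulting functional subgraph are contracted. That bound is false: in the worst case each phase may produce a single short cycle, so the number of phases of \ref{alg:edmondsDMST} can be $\Theta(n)$ (this is precisely the drawback the paper points out before introducing Lov\'asz's iteration). With $\Theta(n)$ phases, each invoking an $\tilde{\mathcal{O}}(n^{1/4})$-round subroutine, your algorithm would use $\tilde{\mathcal{O}}(n^{5/4})$ rounds, which does not establish the theorem and is worse than the trivial $n$-round strategy. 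The $\lceil\log n\rceil$ bound (Lemma \ref{lemma:DMSTalg}) comes from Lov\'asz's enlarged contraction in \ref{alg:LSI}/\ref{alg:QDLSI}: one computes weighted APSP distances in the current (soft-contracted, reweighted) graph $G$, determines for each cycle $H_j$ the threshold $\beta_j$ and the set $U_j$ of nodes within distance $\beta_j$ of the cycle, contracts all of $U_j$ (not just the cycle) with the associated weight modifications, and adds the minimizing external edge $v_iu_i$ to $H$ so that active components merge pairwise; it is this merging that halves the number of active components per iteration.

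Relatedly, your use of the quantum APSP primitive is misdirected: you apply it to the unweighted functional graph $F$ merely to detect cycles, a task that needs no quantum search at all, since $F$ has one chosen edge per non-root node and can be learned by every node in $O(1)$ rounds of classical broadcasts (this is exactly step \ref{itm:QLI1} of \ref{alg:QDLSI}); pointer structure and cycle membership are then computed locally. The expensive and essential use of the quantum APSP-plus-routing-tables subroutine of \S\ref{sec: APSP and routing} is on the weighted graph itself, once per Lov\'asz iteration, to obtain the distances $d_j(\cdot)$, the thresholds $\beta_j$, the contraction sets $U_j$, and the routing information later needed to unpack the super-vertices into an actual tree. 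Your bookkeeping of super-vertex labels and weight reductions is in the right spirit (it parallels the paper's soft contractions), but without the distance-based enlarged contraction and component merging, the phase count, and hence the claimed $\tilde{\mathcal{O}}(n^{1/4})$ total, does not follow.
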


\section{APSP and Routing Tables} \label{sec: APSP and routing}
We first describe an algorithm for the APSP problem with routing tables in the qCCM, for which we combine an algorithm of \cite{IzumiLeGallMag2019_APSP_QuantumDist} with a routing table computation from \cite{witness_ref_Zwick2000}. For this, we reduce APSP with routing tables to triangle finding via \textit{distance products} as in \cite{CensorHillel2016_AlgebraicMethodsCCM_fast_APSP}.

\subsection{Distance Products and Routing Tables}\label{sec:distance products and triangle finding}
\begin{definition}\label{def: routing table}
A \textit{routing table} for a node $v$ is a function $R_v: V \to V$ mapping a vertex $u$ to the first node visited in the shortest path going from $v$ to $u$ other than $v$ itself.
\end{definition}

\begin{definition}\label{def: distance products}
The \textit{distance product} between two $n\cross n$ matrices $A$ and $B$ 
is defined as the $n\cross n$ matrix $A \star B$ with entries:  
\begin{align}
    (A \star B)_{ij} = \min_k\{A_{ik} + B_{kj}\}.
\end{align}
\end{definition}
The distance product is also sometimes called the min-plus or tropical product.
For shortest paths, we will repeatedly square the graph adjacency matrix with respect to the distance product.
For an $n\cross n$ matrix $W$ and an integer $k$, let us denote
    $W^{k, \star} := W\star(W \star ( \dots(W\star W)) \dots)$
as the $k^{th}$ power of the distance product.
For a graph $G = (V, E, W)$ with weighted adjacency matrix $W$ (assigning $W_{uv} = \infty$ if $uv \notin E$), $W^{k, \star}_{uv}$ is the length of the shortest path from $v$ to $u$ in $G$ using at most $k$ hops.
Hence, for any $N \geq n$, $W^{N, \star}$ contains all the shortest path distances between nodes in $G$.
As these distance products obey standard exponent rules, we may take $N = 2^{\ceil{\log n}}$ to recursively compute the APSP distances via taking $\ceil{\log n}$ distance product squares:
\begin{align}
    W^{2, \star} = W\star W, \phantom{x}
    W^{4,\star} = \left(W^{2, \star}\right)^{2,\star},
    \dots, \phantom{x}
    W^{2^{\ceil{\log n}}, \star} = \left(W^{2^{\ceil{\log n}-1}, \star} \right)^{2, \star}. \label{eq: DistProdSquare}
\end{align}
This procedure reduces computing APSP distances to computing $\ceil{\log n}$ distance products.
In the context of the CONGEST-CLIQUE model, each node needs to learn the row of $W^n$ that represents it. As we also require nodes to learn their routing tables, we provide a scheme in \S \ref{sec: routing via witness} that is well-suited for our setting to extend \cite{IzumiLeGallMag2019_APSP_QuantumDist} to also compute routing tables. 

\subsection{Distance Products via Triangle Finding}\label{sec: distance prod via triangles}

Having established reductions to distance products, we turn to their efficient computation.
The main idea is that we can reduce distance products to a binary search in which each step in the search finds negative triangles.
This procedure corresponds to \cite[Proposition~2]{IzumiLeGall2020_TriangleFinding}, which we describe here, restricting to finding the distance product square needed for Eq.~\eqref{eq: DistProdSquare}.

A negative triangle in a weighted graph is a set of edges $\Delta^- = (uv, vw, wu) \subset E^3$ such that $\sum_{e\in \Delta^-} W_e < 0$. Let us denote the set of all negative triangles in a graph $G$ as $\Delta^-_G$.
Specifically, we will be interested in each node $v$ being able to output edges $vu \in \delta(v)$ such that $vu$ is involved in at least one negative triangle in $G$.
Let us call this problem \ref{prob:findedges}, and define it formally as: 

\begin{mdframed}\namedlabel{prob:findedges}{
\texttt{FindEdges}}
\begin{itemize}
    \item[] Input: An integer-weighted (directed or undirected) graph $G = (V, E, W)$ distributed among the nodes, with each node $v$ knowing $\mathcal{N}_G(v)$, as well as the weights $W_{vu}$ for each $u \in \mathcal{N}_G(v)$.
    \item[] Output: For each node $v$, its output is all the edges $vu \in E$ that are involved in at least one negative triangle in $G$.
\end{itemize}
\end{mdframed}

\begin{prop}\label{prop: FindEdges to DistProds}
If \ref{prob:findedges} on an $n$-node integer-weighted graph $G = (V, E, W)$ can be solved in $T(n)$ rounds, then the distance product $A \star B$ of two $n\cross n$ matrices $A$ and $B$ with entries in $[M]$ can be computed in $T(3n)\cdot \ceil{\log_2(2M)}$ rounds.
\end{prop}

\begin{proof}
 Let $A$ and $B$ be arbitrary $n \cross n$ integer-valued matrices, and $D$ be an $n \cross n$ matrix initialized to $\0$. Let each $u\in V$ simulate three copies of itself,$u_1, u_2, u_3$, writing $V_1, V_2, V_3$ as the sets of copies of nodes in $V$. Consider the graph $G' = (V_1 \cup V_2 \cup V_3, E', W')$, by letting $u_iv_j \in E'$ for $u_i \in V_i, v_j \in V_j, i\neq j$, taking $W_{u_1v_2}' = A_{uv}$ for $u_1 \in V_1, v_{2} \in V_{2}$, 
  $W_{u_2v_3}' = B_{uv}$ for $u_2 \in V_2, v_{3} \in V_{3}$,
  and $W_{u_3v_1}' = D_{uv}$ for $u_3 \in V_3, v_1 \in V_{1}$.
An edge $zv$ is part of a negative triangle in $G'$ exactly whenever
$$
\min_{u \in V} \{A_{vu} + B_{uz}\} < -D_{zv}.
$$
Assuming we can compute \ref{prob:findedges} for a $k$-node graph in $T(n)$ rounds, 
with a non-positive matrix $D = \mathbf{0}$ initialized we can apply simultaneous binary searches on $D_{zv}$, with values between $\{-2M, 0\}$, updating it for each node $v$ after each run of \ref{prob:findedges} to find 
$\min_{u \in V} \{A_{vu} + B_{uz}\}$ for every other node $z$ in\\ $T(3n)\cdot \ceil{\log(\max_{v, z \in V}\{\min_{u \in V}\{ A_{vu}+B_{uz}\}\})}$ rounds, since $G'$ is a tripartite graph with $3n$ nodes.
\end{proof}

\begin{remark}
    This procedure can be realized in a single $n$-node distributed graph by letting each node represent the three copies of itself since $G'$ is tripartite. The $T(3n)$ stems from each processor node possibly needing to send one message for each node it is simulating in each round of \ref{prob:findedges}. If bandwidth per message is large enough (3 times the bandwidth needed for solving \ref{prob:findedges} in $T(n)$ rounds), then this can be done in $T(n)$ rounds. 
\end{remark}

So for this binary search, each node $v$ initializes and locally stores $D_{vz} = 0$ for each other $z \in V$, after which we solve \ref{prob:findedges} on $G'$.
The node then updates each $D_{vz}$ according to whether or not the edge copies of $vz$ were part of a negative triangle in $G'$, after which \ref{prob:findedges} is computed with the updated values for $D$.
This is repeated until all the $\min_{u \in V} \{A_{vu} + B_{uz}\}$ have been determined. \\

\subsection{Routing Tables via Efficient Computation of Witness Matrices}\label{sec: routing via witness}

For the routing table entries, we also need each node $v$ to know the intermediate node $u$ that is being used to attain $\min_{u \in V} \{W_{vu} + W_{uz}\}$. 
\begin{definition}
        For a distance product $A\star B$ of two $n \times n$ matrices $A, B$, a {\em witness matrix} $C$ is an $n\times n$ matrix such that 
        \begin{align*}
            C_{ij} \in argmin_{k\in [n]} \{A_{ik}+B_{kj}\}
        \end{align*}
    \end{definition}
Put simply, a witness matrix contains the intermediate entries used to attain the values in the resulting distance product.
We present here a simple way of computing witness matrices along with the distance product by modifying the matrix entries appropriately, first considered by \cite{witness_ref_Zwick2000}.
The approach is well-suited for our algorithm, as we only incur $\mathcal{O}(\log n)$ additional calls to \ref{prob:findedges} for a distance product computation with a witness matrix.

For an $n \cross n$ integer matrix $W$, obtain matrices $W'$ and $W''$ by taking $W'_{ij} = n W_{ij}+ j -1$ and $W''_{ji} = nW_{ji}$. Set $K = W'\star W''$.

\begin{claim} \label{claim: witness from dist}
With $W, W', W'',$ and $K$ as defined immediately above, 
\begin{itemize}
        \item[(i)] $\floor*{ {\dfrac{K}{n}} } = W^{2, \star}$
    \item[(ii)] $(K \mod n)+1$ is a witness matrix for $W^{2, \star}.$
\end{itemize}
\end{claim}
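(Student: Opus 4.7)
The plan is to unfold $K = W' \star W''$ and exploit the key design choice: the factor $n$ in front of $W_{ij}$ makes the pairwise-distance cost strictly dominate the tiebreaker term $j-1$, which lies in $\{0,1,\ldots,n-1\}$. Concretely, I would start by writing
\begin{align*}
K_{ij} = \min_{k}\{W'_{ik} + W''_{kj}\} = \min_{k}\{n(W_{ik}+W_{kj}) + (k-1)\}.
\end{align*}
Since $W$ has integer entries, for any two indices $k_1 \neq k_2$ the values $W_{ik_1}+W_{k_1 j}$ and $W_{ik_2}+W_{k_2 j}$ either coincide or differ by at least $1$. Multiplying by $n$ turns a nonzero gap into one of size at least $n$, which strictly exceeds the range $[0,n-1]$ of the perturbation $k-1$. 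Therefore the minimum of $K_{ij}$ can only be attained at an index that already minimizes $W_{ik}+W_{kj}$, and among those minimizers the residue $k-1$ selects the smallest index.

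Second, I would pin down $K_{ij}$ exactly. Letting $m := W^{2,\star}_{ij} = \min_{k}\{W_{ik}+W_{kj}\}$ and $k^*$ be the smallest $k$ achieving this minimum, the argument above yields the clean decomposition
\begin{align*}
K_{ij} = n\,m + (k^*-1), \qquad k^* - 1 \in \{0,1,\ldots,n-1\}.
\end{align*}

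From this, both parts of the claim follow by arithmetic. For (i), $K_{ij}/n = m + (k^*-1)/n$ with $(k^*-1)/n \in [0,1)$, hence $\lfloor K_{ij}/n \rfloor = m = W^{2,\star}_{ij}$, giving $\lfloor K/n \rfloor = W^{2,\star}$ entrywise. For (ii), with the standard convention that $a \bmod n \in \{0,\ldots,n-1\}$, we get $K_{ij}\bmod n = k^* - 1$, so $(K_{ij}\bmod n) + 1 = k^*$, which lies in $\arg\min_{k}\{W_{ik}+W_{kj}\}$ by construction and is therefore a valid witness entry.

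The only subtlety worth flagging is that the binary-search reduction in Prop.~\ref{prop: FindEdges to DistProds} shifts weights so that $W_{ij}$, and hence $m$ and $K_{ij}$, can be negative. I would therefore verify carefully that the decomposition $K_{ij} = nm + (k^*-1)$ with $k^*-1 \in \{0,\ldots,n-1\}$ is exactly the Euclidean-division decomposition $K_{ij} = n\lfloor K_{ij}/n\rfloor + (K_{ij}\bmod n)$, so that the identities in (i) and (ii) hold regardless of sign. This uniqueness of the quotient/remainder pair is the only place where signs could matter, and once that is checked the rest is immediate.
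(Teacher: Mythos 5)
Your proof is correct and follows essentially the same route as the paper's: unfold $K_{ij}=\min_k\{n(W_{ik}+W_{kj})+(k-1)\}$, use the fact that the factor $n$ makes the integer-valued base term dominate the tiebreaker $k-1\in\{0,\dots,n-1\}$, and then read off the quotient and remainder to get (i) and (ii). Your version merely spells out more explicitly the gap argument and the Euclidean-division/sign check that the paper leaves as ``routine calculations,'' which is a fine (and slightly more careful) presentation of the same argument.
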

The claim follows from routine calculations of the quantities involved and can be found in the Appendix, \S \ref{app:witnessfromdist}.

Hence, we can obtain witness matrices by simply changing the entries of our matrices by no more than a multiplicative factor of $n$ and an addition of $n$. 
Since the complexity of our method depends on the magnitude of the entries of $W$ logarithmically, we only need logarithmically many more calls to \ref{prob:findedges} to obtain witness matrices along with the distance products, making this simple method well-suited for our approach.
More precisely, we can compute $W^{2, \star}$ with a witness matrix using $\ceil*{\log(2n\cdot \max_{i,j} \{W^{2,\star}_{ij} < \infty\})}.$ calls to \ref{prob:findedges}.
We obtain the following corollary to proposition \ref{prop: FindEdges to DistProds} to characterize the exact number of rounds needed: 
\begin{corollary} \label{cor: distance product from findedges}
If \ref{prob:findedges} on an $n$-node integer-weighted graph $G = (V, E, W)$ can be solved in $T(n)$ rounds, then the distance product square $W^{2, \star}$, along with a witness matrix $H$, can be computed in $T(3n)\cdot \ceil{\log_2(n\cdot \max_{v, z \in V}\{\min_{u \in V}\{ W_{vu}+W_{uz}\}\}+n)}$ rounds.
\end{corollary}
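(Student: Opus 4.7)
The plan is to compose Proposition \ref{prop: FindEdges to DistProds} with Claim \ref{claim: witness from dist}, applying the former to the transformed matrices $W'$ and $W''$ defined in the witness-matrix construction. The upshot of Claim \ref{claim: witness from dist} is that the single distance product $K = W' \star W''$ encodes both $W^{2,\star}$ (via $\lfloor K/n \rfloor$) and a witness matrix (via $(K \bmod n) + 1$), so it suffices to show that $K$ can be computed within the stated round bound.

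First I would invoke Proposition \ref{prop: FindEdges to DistProds} with $A := W'$ and $B := W''$: the proposition tells us the distance product $A \star B$ can be computed in $T(3n) \cdot \lceil \log_2(\mu) \rceil$ rounds, where $\mu$ is the maximum attained value $\max_{v,z \in V}\{\min_{u \in V}\{A_{vu}+B_{uz}\}\}$ (the quantity appearing in the proof of that proposition, which drives the binary search). Here I would be using that the only change in applying the proposition to $W'$ and $W''$ instead of to two generic matrices is that the integer entries are now bounded by $n\cdot\max_{i,j}|W_{ij}|+n-1$ rather than by $M$, which simply scales the binary-search range.

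Next I would bound $\mu$ for the specific choice $A=W'$, $B=W''$. Since $W'_{vu}+W''_{uz} = n(W_{vu}+W_{uz}) + (u-1)$ and $0 \le u-1 \le n-1$, the $(u-1)$ term acts as a tie-breaker strictly smaller than $n$, so any $u$ minimizing $W_{vu}+W_{uz}$ gives
\begin{align*}
\min_{u\in V}\{W'_{vu}+W''_{uz}\} \;\le\; n\cdot \min_{u\in V}\{W_{vu}+W_{uz}\} + (n-1).
\end{align*}
Taking the max over $v,z$ and plugging into the bound from Proposition \ref{prop: FindEdges to DistProds} yields the round count $T(3n)\cdot \lceil\log_2\bigl(n\cdot \max_{v,z}\{\min_u\{W_{vu}+W_{uz}\}\}+n\bigr)\rceil$ stated in the corollary. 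Finally, once $K$ is held in the appropriate distributed form, each node extracts its row of $W^{2,\star}$ and of the witness matrix $H$ from its row of $K$ by local arithmetic (the division-with-remainder in Claim \ref{claim: witness from dist}), adding no further rounds.

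The only real obstacle is the bookkeeping on the log factor: one must verify that the transformation $W \mapsto (W', W'')$ neither inflates the binary-search range by more than the claimed factor $n$ (plus the additive $n$ absorbing the tie-breaker offset) nor destroys the argmin structure required for the witness extraction. Both are immediate from the linearity of the shift by $(j-1)$ and from the fact that this shift is strictly smaller than one unit of the scaled weight, so no additional technical machinery is required beyond the two results already at hand.
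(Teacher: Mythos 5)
Your proposal is correct and follows essentially the same route as the paper: the paper's proof likewise applies Proposition \ref{prop: FindEdges to DistProds} to the transformed matrices $W'$ and $W''$ from Claim \ref{claim: witness from dist} and concludes by observing that $\max_{v,z\in V}\{\min_{u\in V}\{W'_{vu}+W''_{uz}\}\} \le n\cdot\max_{v,z\in V}\{\min_{u\in V}\{W_{vu}+W_{uz}\}\}+n$, which is exactly your tie-breaker bound. Your additional remarks about the binary-search range and the purely local extraction of $W^{2,\star}$ and $H$ from $K$ are consistent with, and merely elaborate on, the paper's argument.
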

\begin{proof}
This follows from claim \ref{claim: witness from dist} and proposition \ref{prop: FindEdges to DistProds} upon observing that\\ $\max_{v, z \in V}\{\min_{u \in V}\{ W'_{vu}+W''_{uz})\}\} \leq n\cdot \max_{v, z \in V}\{\min_{u \in V}\{ W_{vu}+W_{uz}\}\}+n$.
\end{proof}

Once we obtain witness matrices along with the distance product computations, constructing the routing tables for each node along the way of computing APSP is straightforward.
In each squaring of $W$ in Eq.~\eqref{eq: DistProdSquare}, each node updates its routing table entries according to the corresponding witness matrix entry observed.
It is worth noting that these routing table entries need only be stored and accessed classically so that we avoid using unnecessary quantum data storage.

\subsection{Triangle Finding} \label{sec:trangle finding}

Given the results from sections \ref{sec: routing via witness} and \ref{sec: distance prod via triangles}, we have reduced finding both the routing tables and distance product to having each edge learn the edges involved in a negative triangle in the graph. 
This section will thus describe the procedure to solve the \ref{prob:findedges} subroutine.
We state here a central result from \cite{IzumiLeGallMag2019_APSP_QuantumDist}:

\begin{prop} \label{thm: triangle edge finding}
There exists an algorithm in the quantum CONGEST-CLIQUE model that solves the \ref{prob:findedges} subroutine in $\tilde{\mathcal{O}}(n^{1/4})$ rounds. 
\end{prop}
We will proceed to describe each step of the algorithm to describe the precise round complexity beyond the $\tilde{O}(n^{1/4})$ to characterize the constants involved in the interest of assessing the future implementability of our algorithms. \\

As a preliminary, we give a message routing lemma of \cite{Dolev_lenzen_2012TriTA} for the congested clique, which will be used repeatedly:
\begin{lemma}\label{lemma: dolev routing lemma}
Suppose each node in $G$ is the source and destination for at most $n$ messages of size $\mathcal{O}(\log n)$ and that the sources and destinations of each message are known in advance to all nodes.
Then all messages can be routed to their destinations in $2$ rounds. 
\end{lemma}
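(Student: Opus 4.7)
The plan is to reduce the problem to a bipartite edge-coloring problem and use K\"onig's theorem. Concretely, I would build an auxiliary bipartite multigraph $H = (V_L \cup V_R, \mathcal{M})$ with $V_L$ and $V_R$ both copies of $V$, and with one edge in $\mathcal{M}$ for each message to be routed (left endpoint = source, right endpoint = destination). The hypothesis that each node in $G$ is the source of at most $n$ messages and the destination of at most $n$ messages translates exactly to $H$ having maximum degree at most $n$.

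Next I would invoke K\"onig's edge-coloring theorem for bipartite multigraphs to obtain a proper edge coloring of $H$ using at most $n$ colors $\{1, \dots, n\}$, i.e., a decomposition of $\mathcal{M}$ into at most $n$ matchings $M_1, \dots, M_n$. Identifying each color $c \in [n]$ with the node of the clique whose ID is $c$, I would use the coloring to assign a relay: a message $m$ that receives color $c$ is routed from $s(m)$ to $d(m)$ via intermediate node $c$. The routing schedule is then: in round $1$, every source ships each of its messages to the assigned relay; in round $2$, every relay forwards each message it received to the intended destination.

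The two round-by-round bandwidth constraints are then immediate from the properness of the coloring. For round $1$: if two messages had the same source $s$ and the same relay $c$, they would be two distinct edges of $H$ sharing the left endpoint $s$ and both colored $c$, contradicting properness; hence each source sends at most one message to any given relay. Symmetrically for round $2$: if two messages had the same relay $c$ and the same destination $d$, they would share the right endpoint $d$ and the color $c$, again contradicting properness. Therefore every sender–receiver pair carries at most one $\mathcal{O}(\log n)$-bit message per round, matching the CONGEST-CLIQUE bandwidth limit.

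The one delicate point, and where I expect the real work to sit, is the assumption that the sources and destinations of every message are known in advance to all nodes. This is precisely what lets each node independently construct the same bipartite multigraph $H$, compute the same canonical K\"onig edge coloring with no further communication (e.g., by running a deterministic coloring procedure on a commonly known edge list ordered by lexicographic $(s,d)$ pairs, breaking ties by message index), and hence agree on the relay assignment without an extra round of coordination. Once consistency of the coloring across nodes is in hand, the two rounds of the schedule above complete the proof. \emph{Remark:} the statement is stated under the $n$-messages-per-endpoint budget, but the same argument goes through verbatim as long as the underlying bipartite multigraph has maximum degree $\leq n$, which is the form in which this lemma is typically used in the sequel.
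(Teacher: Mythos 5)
Your proof is correct: the König edge-coloring of the bipartite source--destination multigraph (max degree $\le n$), with color classes used as relay assignments and common knowledge guaranteeing a consistent coloring without communication, is exactly the standard argument for this routing lemma. The paper itself does not prove the statement but quotes it from \cite{Dolev_lenzen_2012TriTA}, and your argument is essentially the proof given in that reference, so there is nothing to correct.
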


We introduce the subproblem \texttt{FindEdgesWithPromise} (\ref{prob:FEWP} henceforth).
Let $\Gamma(u,v)$ denote the number of nodes $w\in V$ such that $(u,v,w)$ forms a negative triangle in $G$.
\begin{mdframed}\namedlabel{prob:FEWP}{\texttt{FEWP}}
    \item[] Input: An integer-weighted graph $G = (V, E, W)$  distributed among the nodes and a set $S\subset\mathcal{P}(V)$, with each node $v$ knowing $\mathcal{N}_G(v)$ and $S$.
    \item[] Promise: For each $uv \in S, \Gamma(u,v)\leq 90\log n.$ 
    \item[] Output: For each node $v$, its output is the edges $vu \in S$ that satisfy $\Gamma(u,v) > 0$.
\end{mdframed}


We give here a description of the procedure of \cite{IzumiLeGallMag2019_APSP_QuantumDist} to solve \ref{prob:findedges} given an algorithm $\mathcal{A}$ to solve \ref{prob:FEWP}. Let $\eps_{\mathcal{A}}$ be the failure probability of the algorithm $\mathcal{A}$ for an instance of \ref{prob:FEWP}. 
\begin{mdframed}
\namedlabel{alg:FindEdgesViaFEWP}{\texttt{FindEdgesViaFEWP}}
\begin{itemize}
    \item[\namedlabel{itm:FindTris1}{1}:] $S := \mathcal{P}; M := \emptyset; i := 0$. 
    
    \item[\namedlabel{itm:FindTris2}{2}:] WHILE $60\cdot 2^i \log n \leq n$: 
    \begin{itemize}
        \item[a):] Each node samples each of its edges with probability $\sqrt{\frac{60\cdot 2^i\log n}{n}}$, so that we obtain a distributed subgraph $G'$ of $G$ consisting of the sampled edges
        \item[b):] Run $\mathcal{A}$
        on $(G', S)$. Denote the output by $S'$. 
        \item[c):] $S \leftarrow S\setminus S'; M \leftarrow M\cup S; i \leftarrow i+1.$
    \end{itemize}
    
    \item [\namedlabel{itm:FindTris3}{3}:]Run $\mathcal{A}$ on $(G,S)$, and call $S''$ the output. 
    \item [\namedlabel{itm:FindTris4}{4}:]Output $M\cup S$. 
\end{itemize}
\end{mdframed}
 From step \ref{itm:FindTris2} of this above algorithm, it is straightforward to check that this requires a maximum of \\$c_n := \ceil{\log\text{(}\frac{n}{60\log n}\text{)}}+1$ calls to the $\mathcal{A}$ subroutine to solve \ref{prob:FEWP}. 
Further, it succeeds with probability at least $1-c_n/n^3 -c_n/n^28-(c_n+1)\varepsilon_{\mathcal{A}}$. We refer the reader to \cite[\S 3]{IzumiLeGallMag2019_APSP_QuantumDist} for the proof of correctness.
We now turn toward constructing an efficient algorithm for FEWP. 

To solve this subroutine, we must first introduce an additional labeling scheme over the nodes that will determine how the search for negative triangles will be split up to avoid communication congestion in the network.
Assume for simplicity that $n^{1/4}, \sqrt{n}, n^{3/4}$ are integers. 
Let $\mathcal{M} = [n^{1/4}] \times [n^{1/4}] \times [\sqrt{n}]$. 
Clearly, $|\mathcal{M}| = n$, and $\mathcal{M}$ admits a total ordering lexicographically. 
Since we assume each node $v_i \in V$ is labeled with unique integer ID $i \in [n]$, $v_i$ can select the element in $\mathcal{M}$ that has place $i$ in the lexicographic ordering of $\mathcal{M}$ without communication occurring. 
Hence, each node $v\in V$ is associated with a unique triple $(i,j,k) \in \mathcal{M}$.
We will refer to the unique node associated with $(i,j,k) \in \mathcal{M}$ as node $v_{(i,j,k)}$.\\
The next ingredient is a partitioning scheme of the space of possible triangles. 
Let $\mathcal{U}$ be a partition of $V$ into $n^{1/4}$ subsets containing $n^{3/4}$ nodes each,
by taking $U_i := \{v_j: j\in \{(i-1)\cdot n^{3/4}, \dots, i\cdot n^{3/4}\}\}$ for $i=1, \dots, n^{1/4}$, and $\mathcal{U}:= \{U_1, \dots, U_{n^{1_4}}\}$.
Apply the same idea to create a partition $\mathcal{U}'$ of $\sqrt{n}$ sets of size $\sqrt{n}$, by taking $U_i' := \{v_j: j\in \{(i-1)\cdot \sqrt{n}, \dots, i\cdot \sqrt{n}\}\}$ for $i=1, \dots, \sqrt{n}$, and $\mathcal{U}:= \{U_1, \dots, U_{\sqrt{n}}\}$.
Let $\mathbb{V} = \mathcal{U}\times \mathcal{U} \times \mathcal{U}'$. Each node $v_{(i,j,k)}$ can then locally determine its association with the element $(U_i, U_j, U_k') \in \mathbb{V}$ since $|\mathbb{V}|= n$. 
Further, if we use one round to have all nodes broadcast their IDs to all other nodes, each node $v_{(i,j,k)}$ can locally compute the $(U_i, U_j, U_k')$ it is assigned to, so this assignment can be done in one round. 

We present here the algorithm \ref{alg:computepairs} used to solve the FEWP subroutine. 
\begin{mdframed}\namedlabel{alg:computepairs}{\texttt{ComputePairs}}
    \item[] Input: An integer-weighted graph $G = (V, E, W)$  distributed among the nodes, a partition of $V\times V\times V$ of $(U_i, U_j, U'_k)$ associated with each node as above, and a set $S\subset\mathcal{P}(V)$ such that for $uv \in S, \Gamma(u,v)\leq 90\log n.$ 
    \item[] Output: For each node $v$, its output is the edges $vu \in S$ that satisfy $\Gamma(u,v) > 0$.

\begin{itemize}
    \item [\namedlabel{itm:CP1}{1}:]Every node $v_{(i, j, k)}$ receives the weights $W_{uv}$, $W_{vw}$ for all $uv \in \mathcal{P}(U_i, U_j)$ and $vw \in \mathcal{P}(U_j, U_k')$. 
    \item [\namedlabel{itm:CP2}{2}:]Every node $v_{(i,j,k)}$ constructs the set $\Lambda_k(U_i, U_j) \subset \mathcal{P}(U_i, U_j)$ by selecting every $uv \in \mathcal{P}(U_i, U_j)$ with probability $10\cdot \frac{\log n}{\sqrt{n}}$. 
    If $|\{v\in U_1: uv \in \Lambda_k(U_i, U_j)\}| > 100n^{1/4} \log n$ for some $u\in U_j$, abort the algorithm and report failure. 
    Otherwise, $v_{(i, j, k)}$ keeps all pairs $uv \in \Lambda_k(U_i, U_j) \cap S$ and receives the weights $W{uv}$ for all of those pairs. 
    Denote those elements of $\Lambda_k(U_i, U_j) \cap S$ as $u_1^k v_1^k, \dots, u_m^k v_m^k$. 
    \item [\namedlabel{itm:CP3}{3}:]Every node $v_{(i,j,k)}$ checks for each $l\in [m]$ whether there is some $U\in \mathcal{U}'$ that contains a node $w$ such that $(u^k_l, v^k_l, w)$ forms a negative triangle, 
    and outputs all pairs $u^k_l v^k_l$ for which a negative triangle was found. 
\end{itemize}
\end{mdframed}

With probability at least $1-2/n$, the algorithm \ref{alg:computepairs} does not terminate at step \ref{itm:CP2} and every pair $(u,v) \in S$ appears in at least one $\Lambda_k(U_i, U_j)$. 
The details for this result can be found in \cite[Lemma 2]{IzumiLeGallMag2019_APSP_QuantumDist}. 

Step \ref{itm:CP1} requires $2n^{1/4}\ceil{\frac{\log W}{\log n}}$ rounds and can be implemented fully classically without any qubit communication. 
Step \ref{itm:CP2} requires at most $200 \log n \ceil{\frac{\log W}{\log n}}$ rounds and can also be implemented classically. 
Step \ref{itm:CP3} can be implemented in $\tilde{\mathcal{O}}(n^{1/4})$ rounds quantumly taking advantage of distributed Grover search but would take $\mathcal{O}(\sqrt{n})$ steps to implement classically. The remainder of this section is devoted to illustrating how this step can be done in $\tilde{\mathcal{O}}(n^{1/4})$ rounds.

Define the following quantity:
\begin{definition}\label{def: delta}
For node $v_{(i, j, k)}$, let 
$$
\Delta(i, j, k):= \{(u, v) \in \mathcal{P}(U_i, U_j)\cap S: \exists w \in U_k' \textit{ with } (u, v, w) \textit{ forming a negative triangle in } G\} 
$$
\end{definition}

For simultaneous quantum searches, we divide the nodes into different classes based on the number of negative triangles they are a part of with the following routine:

\begin{mdframed}\namedlabel{alg:IdentifyClass}{\texttt{IdentifyClass}}
    \item[] Input: An integer-weighted graph $G = (V, E, W)$  distributed among the nodes, and a set $S\subset E$ as in \ref{prob:FEWP}. 
    \item[] Output: For each node $v$, a class $\alpha$ the node belongs to.
\begin{itemize}
    \item [\namedlabel{itm:IC1}{1}:] Every node $u_{(i,j,k)}\in V$ samples each node in $\{v\in V: (u_{(i,j,k)},v)\in S\}$ with probability $\frac{10\log n}{n}$, creating a set $\Lambda(u)$ of sampled vertices. If $\max_u |\Lambda(u)| > 20 \log n$, abort the algorithm and report a failure. 
    Otherwise, have each node broadcast $\Lambda(u)$ to all other nodes, and take $R := \cup_{u\in V}\{uv|v\in \Lambda(u)\}$. 
    \item [\namedlabel{itm:IC2}{2}:]
    Each $v_{(i,j,k)} \in V$ computes $d_{i,j,k} := |\{  uv\in \mathcal{P}(i,j)\cap R:\phantom{x} \exists w \in U'_k$ such that $\{u,v,w\}$  forms a negative triangle in $G \}|$, then determines its class $\alpha$ to be $min\{c\in \N:d_{i,j,k} < 10\cdot 2^c\log n\}$.
    
\end{itemize}
\end{mdframed}
This uses at most $20\log n$ rounds (each node sends at most that many IDs to every other node) and can be implemented by having all exchanged messages consist only of classical bits. 
Using Chernoff's bound, one can show that the procedure succeeds with probability of at least $1-1/n$ as seen in \cite[Proposition 5]{IzumiLeGall2020_TriangleFinding}.

Let us make the convenient assumption that $\alpha = 0$ for all $v_{i,j,k}$, which avoids some technicalities around congestion in the forthcoming triangle search. Note that $\alpha \leq \frac{1}{2}\log n$, so we can run successive searches for each $\alpha$ for nodes in with class $\alpha$ in the general case. The general case is discussed in \S \ref{app:alpha nonzero case} of the appendix and can also be found in \cite{IzumiLeGallMag2019_APSP_QuantumDist}, but this case is sufficient to convey the central ideas.

We have all the necessary ingredients to describe the implementation of step \ref{itm:CP3} of the \ref{alg:computepairs} procedure. 
\begin{mdframed}
\begin{itemize}
    \item [\namedlabel{itm:CP3.1}{3.1}:] Each node executes the \ref{alg:IdentifyClass} procedure. 
    \item [\namedlabel{itm:CP3.2}{3.2}:]
    For each $\alpha$, for every $l\in [m]$, every node $v_{(i,j,k)}$ in class $\alpha$ executes a quantum search to find whether there is a $U_k'\in \mathcal{U}'$ with some $w\in U_k'$ forming a negative triangle $(u^k_l, v^k_l, w)$ in $G$, and then reports all the pairs $u^k_l v^k_l$ for which such a $U_k'$ was found.
\end{itemize}
\end{mdframed}

This provides the basis of the triangle-searching strategy. To summarize the intuition of the asymptotic speedup in this paper: Since the $U'_k$ have size $\sqrt{n}$ (recall that $|\mathcal{U}'| = \sqrt{n}$), if each node using a quantum search can search through its assigned $U'_k$ in $\tilde{\mathcal{O}}(n^{1/4})$ rounds, simultaneously, we will obtain our desired complexity. We will complete this argument in \S \ref{sec:finishing triangles} and first describe the quantum searches used therein in the following subsection. 

\subsection{Distributed Quantum Searches}\label{sec: distributed quantum search}

With this intuition in mind, we now state two useful theorems of \cite{IzumiLeGallMag2019_APSP_QuantumDist} for the distributed quantum searches. Let $X$ denote a finite set throughout this subsection.
\begin{theorem}
Let $g:X\rightarrow \{0,1\}$, if a node $u$ can compute $g(x)$ in $r$ rounds in the CONGEST-CLIQUE model for any $x\in X$, then there exists an algorithm in the Quantum CONGEST-CLIQUE that has $u$ output some $x\in X$ with $g(x) = 1$ with high probability using $\tilde{\mathcal{O}}(r\sqrt{|X|})$ rounds. 
\end{theorem}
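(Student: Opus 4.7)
The plan is to construct a distributed analog of Grover's search by turning the $r$-round classical procedure for evaluating $g$ into a quantum oracle that node $u$ can query on an input held in superposition, and then driving the usual amplitude amplification from $u$'s side. All the heavy lifting happens in the oracle call; once the oracle works coherently on superpositions, Grover's $\mathcal{O}(\sqrt{|X|})$ query complexity translates directly into $\tilde{\mathcal{O}}(r\sqrt{|X|})$ rounds.

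I would begin by specifying the oracle. Let $\mathcal{C}$ be the given $r$-round classical procedure; for any fixed input $x \in X$, $\mathcal{C}$ consists of a sequence of local Boolean computations interleaved with $r$ rounds of $\mathcal{O}(\log n)$-bit messages. Each local computation can be made reversible by the standard Bennett-style trick (keep a tape of intermediate results, and uncompute at the end), and each classical message can be sent as an unentangled qubit register, as noted in the discussion following Definition~\ref{def: QCCM}. Packaging these reversible steps together yields a unitary $U_g$ on a shared quantum register that realizes the map $|x\rangle|b\rangle|0\rangle_{\text{anc}} \mapsto |x\rangle|b \oplus g(x)\rangle|0\rangle_{\text{anc}}$, using exactly $r$ rounds of quantum communication to implement the forward pass and $r$ more rounds to uncompute the ancillae (so $2r = \mathcal{O}(r)$ rounds total). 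Because $U_g$ acts coherently on the $|x\rangle$ register, initializing that register to $|x\rangle$ in superposition over $X$ and running $U_g$ simulates all queries at once, at a cost of $\mathcal{O}(r)$ rounds per invocation.

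Next, I would run Grover on $u$. Node $u$ locally prepares $|\psi_0\rangle := |X|^{-1/2}\sum_{x\in X}|x\rangle$ on $\lceil \log|X|\rceil$ qubits, which fits in a single register since $\log|X| \le \log n^{\mathcal{O}(1)} = \mathcal{O}(\log n)$ for the sets $X$ arising in our algorithms. Amplitude amplification alternates the oracle reflection (one call to $U_g$, costing $\mathcal{O}(r)$ rounds) with the reflection about $|\psi_0\rangle$ (a purely local unitary on $u$'s register, free in terms of communication). Since the number of solutions $|A_1|$ is unknown a priori, I would use the standard exponential-search wrapper of Brassard--Hoyer--Mosca--Tapp: run amplitude amplification with iteration counts $2^0, 2^1, 2^2, \dots$ up to $\mathcal{O}(\sqrt{|X|})$, measure after each attempt, and have $u$ verify any candidate $x$ it obtains by one extra $r$-round classical evaluation of $g(x)$. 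The total number of oracle calls across all attempts is $\mathcal{O}(\sqrt{|X|})$, and repeating the whole procedure $\mathcal{O}(\log n)$ times boosts the success probability above $1 - 1/\mathrm{poly}(n)$, yielding the $\tilde{\mathcal{O}}(r\sqrt{|X|})$ round count.

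The main obstacle, and the step most worth being careful about, is the coherent execution of $\mathcal{C}$ under the qCCM's bandwidth constraint. One must verify that (i) making $\mathcal{C}$ reversible does not inflate a single $\mathcal{O}(\log n)$-bit classical message into more than $\mathcal{O}(\log n)$ qubits (this is fine because reversibility inflates \emph{ancilla storage}, which is local, rather than communication width), and (ii) the uncomputation phase preserves the round schedule so that the oracle really costs $\mathcal{O}(r)$, not something like $\mathcal{O}(r \log r)$, rounds. A minor secondary issue is that Grover's standard analysis assumes $|A_1| \le |X|/2$; if $|A_1|$ could exceed half, node $u$ first samples $\mathcal{O}(\log n)$ random $x$'s and classically checks them, which either finds a solution immediately or certifies w.h.p.\ that the fraction of marked items is small enough for amplitude amplification to converge. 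With these points handled, the round count is $\mathcal{O}(r\sqrt{|X|}\log|X|\log n) = \tilde{\mathcal{O}}(r\sqrt{|X|})$ as claimed.
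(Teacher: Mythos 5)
Your proposal is correct and follows essentially the same route as the paper (and the Le Gall--Magniez framework it cites): implement the $r$-round classical evaluation procedure coherently so that $g$ can be queried on a superposition held by $u$, then run amplitude amplification (with the standard exponential-search wrapper and uncomputation) to get $\tilde{\mathcal{O}}(r\sqrt{|X|})$ rounds, exactly as sketched in the paper's distributed Grover search example. No substantive differences to report.
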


This basic theorem concerns only single searches, but we need a framework that can perform multiple simultaneous searches. Let $g_1, \dots,g_m: X \rightarrow \{0, 1\}$ and 
$$A_i^0 := \{x\in X: g_i(x) = 0\}, A_i^1 := \{x\in X: g_i(x) = 1\}, \forall i \in [m].$$
Assume there exists an $r$-round classical distributed algorithm $C_m$ that allows a node $u$ upon an input $\chi = (x_1, \dots, x_m) \in X^m$ to determine and output $(g_1(x_1), \dots, g_m(x_m))$.
In our use of distributed searches, $X$ will consist of nodes in the network, and searches will need to communicate with those nodes for which the functions $g_i$ are evaluated. 
To avoid congestion, we will have to consider those $\chi \in X^m$ that have many repeated entries carefully. We introduce some notation for this first. Define the quantity 
$$\alpha(\chi) := \max_{I\subset[m]}|\{ \chi_i = \chi_j \quad \forall i, j \in I\}|,$$ the maximum number of entries in $\chi$ that are all identical. 

Next, given some $\beta \in \N$, assume that in place of $C_m$ we now have a classical algorithm $\tilde{C}_{m, \beta}$ such that upon input $\chi = (x_1, \dots, x_m) \in X^m$, a node $u$ outputs $g_1(x_1), \dots, g_m(x_m)$ if $\alpha(\chi) \leq \beta$ and an arbitrary output otherwise. 
The following theorem summarizes that such a $\tilde{C}_{m, \beta}$ with sufficiently large $\beta$ is enough to maintain a quantum speedup as seen in the previous theorem: 

\begin{theorem}\label{thm: distquantsearches}
For  a set $X$ with $|X|<m/(36\log m)$, suppose there exists such an evaluation algorithm $C_{m, \beta}$ for some $\beta>8m/|X|$ and that $\alpha(\chi) \leq \beta$ for all $\chi \in A_1^1\times \cdot \cdot \cdot \times A_m^1$. Then there is a $\tilde{\mathcal{O}}(r\sqrt{|X|})$-round quantum algorithm that outputs an element of $A_1^1\times \cdot\cdot\cdot \times A_m^1$ with probability at least $1-2/m^2$. 
\end{theorem}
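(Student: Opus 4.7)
The plan is to run $m$ Grover-type searches in parallel over $X$, coupled through a single coherent invocation of $\tilde{C}_{m,\beta}$, and to boost the joint success probability by independent repetition. First, I would have each of the $m$ searching nodes initialize its register in the uniform superposition $|\psi\rangle := \frac{1}{\sqrt{|X|}}\sum_{x\in X}|x\rangle$, so that the joint starting state is $|\Psi_0\rangle = |\psi\rangle^{\otimes m}$. Because $\tilde{C}_{m,\beta}$ is a deterministic distributed procedure whose every step communicates only $\mathcal{O}(\log n)$-bit messages, I can lift it to a unitary $U$ by having each node execute its local operations reversibly and by sending outgoing messages as quantum registers; the resulting $U$ satisfies $U|\chi\rangle|0\rangle = |\chi\rangle|g_1(x_1),\ldots,g_m(x_m)\rangle$ on every $\chi$ with $\alpha(\chi)\leq\beta$ and acts in some unspecified way on the bad subspace $B:=\mathrm{span}\{|\chi\rangle:\alpha(\chi)>\beta\}$. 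Sandwiching a controlled phase on each answer qubit between $U$ and $U^{\dagger}$ yields a parallel phase oracle $O_f$ whose action on $B^{\perp}$ is $O_f|\chi\rangle=\bigl(\prod_{i=1}^m (-1)^{g_i(x_i)}\bigr)|\chi\rangle$, at a cost of $2r$ rounds per call. Combining $O_f$ with the product diffusion $R_\psi^{\otimes m}$, which each node applies locally in $\mathcal{O}(1)$ rounds, gives a parallel Grover iterator $G:=R_\psi^{\otimes m}\,O_f$ that performs one Grover step of each of the $m$ searches simultaneously.

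The heart of the argument is to control the contribution of $B$. For uniformly random $\chi\in X^m$, the bin-count $N_x:=|\{i:\chi_i=x\}|$ is $\mathrm{Bin}(m,1/|X|)$ with mean $\mu=m/|X|$; the assumptions $|X|<m/(36\log m)$ and $\beta>8m/|X|$ give $\mu>36\log m$ and $\beta\geq 8\mu$, so a multiplicative Chernoff bound yields $\Pr[N_x>\beta]\leq m^{-c}$ for any desired constant $c$, and a union bound over the $|X|$ values of $x$ preserves the decay. Hence $\|\Pi_B|\Psi_0\rangle\|^2=\Pr_\chi[\alpha(\chi)>\beta]\leq 1/m^{10}$. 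Because every solution tuple lies in $A_1^1\times\cdots\times A_m^1\subseteq B^{\perp}$ by hypothesis, $G$ agrees with the ideal parallel Grover iterator $G_{\mathrm{id}}$ (in which $U$ is correct everywhere) on the target subspace and deviates only through $B$; a standard hybrid argument therefore bounds $\|(G^k-G_{\mathrm{id}}^k)|\Psi_0\rangle\|$ by $O(k\|\Pi_B|\Psi_0\rangle\|)$.

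Next I would run $k=\lceil\pi\sqrt{|X|}/4\rceil$ iterations of $G$ starting from $|\Psi_0\rangle$ and measure; parallel Grover then concentrates each register's amplitude on $A_i^1$, giving per-register success probability $\Omega(1)$ even in the worst case $|A_i^1|=1$. Repeating the whole procedure $\Theta(\log m)$ times and, for each coordinate $i$, keeping any measured value that is verified by one ideal classical invocation of $\tilde{C}_{m,\beta}$ to lie in $A_i^1$, boosts each individual failure probability to at most $1/m^4$; a union bound over $i\in[m]$ together with the $B$-error gives total failure $\leq 1/m^3 + O(\sqrt{|X|}\log(m)/m^{10})\leq 2/m^2$. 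The round count is $O(\log m)\cdot k\cdot 2r + O(r\log m)=\tilde{\mathcal{O}}(r\sqrt{|X|})$, matching the claim.

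The main obstacle I foresee is the hybrid bound controlling how the faulty action of $U$ on $B$ accumulates across the $k$ applications of $G$: because this mis-action need not preserve the two-dimensional Grover plane spanned by the target and its orthogonal complement, amplitude can in principle leak out via $B$ and then be rotated back in an uncontrolled direction, so careful operator-norm tracking is needed to secure the $O(k\|\Pi_B|\Psi_0\rangle\|)$ bound rather than a quadratic-in-$k$ one. Once that bound is in hand, the Chernoff concentration, the reversible lifting of $\tilde{C}_{m,\beta}$, the diffusion, and the round accounting are all routine.
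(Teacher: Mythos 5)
The paper itself does not prove this theorem internally; it defers to Izumi--Le~Gall--Magniez [Theorem~3], and your overall architecture (parallel Grover iterations coupled through one coherent run of $\tilde{C}_{m,\beta}$, a bad subspace $B$ of high-collision tuples, Chernoff concentration of bin counts, repetition plus classical verification) is the right family of ideas. However, there is a genuine gap exactly at the step you flag as the heart of the argument. The standard hybrid bound gives $\|(G^k-G_{\mathrm{id}}^k)|\Psi_0\rangle\| \le 2\sum_{t=0}^{k-1}\|\Pi_B\,G_{\mathrm{id}}^{t}|\Psi_0\rangle\|$, i.e.\ it is controlled by the bad-subspace mass of the \emph{intermediate} ideal states, not of the initial state; your Chernoff estimate (bin counts $\mathrm{Bin}(m,1/|X|)$) only bounds $\|\Pi_B|\Psi_0\rangle\|$. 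As the ideal iteration proceeds, each register's distribution concentrates on $A_i^1$ (possibly a single element), so at time $t$ the expected number of coordinates equal to a fixed $x$ is of order $|\{i: x\in A_i^1\}| + 2m/|X|$, which the hypothesis only caps at about $\beta + 2m/|X|$ --- i.e.\ it can be comparable to the reliability threshold of $\tilde{C}_{m,\beta}$. Showing that the evolving states nevertheless keep negligible mass on tuples where the evaluation is unreliable is precisely where the promise $\alpha(\chi)\le\beta$ on solution tuples and the numerical slack in $\beta>8m/|X|$ and $|X|<m/(36\log m)$ must be used quantitatively; in your write-up that promise is invoked only to place solutions in $B^{\perp}$, and your error analysis would go through verbatim without it, which cannot be right since without collision control near the end the faulty oracle can corrupt a constant fraction of the amplitude.

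A second, more routine gap: you run a fixed $k=\lceil\pi\sqrt{|X|}/4\rceil$ iterations for all registers. With unknown $|A_i^1|$ the registers rotate at different Grover angles, and at a fixed common $k$ some registers overshoot and land with success probability near $0$; repeating with the same $k$ does not repair this, so you need the standard unknown-number-of-solutions device (randomized iteration counts or successive doubling), and in this coupled setting idle or already-finished registers still occupy coordinates of the joint query, so their contents must be prevented from inflating collision counts. Relatedly, a measured candidate tuple may have $\alpha(\chi)>\beta$, in which case verification by $\tilde{C}_{m,\beta}$ is itself unreliable (such tuples can simply be rejected, but this needs saying). These points are fixable, but together with the mis-stated hybrid bound they leave unestablished the core error analysis that the cited proof actually carries out.
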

The proof can be found in \cite[Theorem 3]{IzumiLeGallMag2019_APSP_QuantumDist}.

\subsection{Final Steps of the Triangle Finding}\label{sec:finishing triangles}
We continue here to complete the step \ref{itm:CP3.2} of the \ref{alg:computepairs} procedure, armed with Theorem \ref{thm: distquantsearches}. We need simultaneous searches to be executed by each node $v_{(i,j,k)}$ to determine the triangles in $U_i\times U_j \times U'_k$. We provide a short lemma first that ensures the conditions for the quantum searches:
\begin{lemma}\label{lemma:IClemma}
The following statements hold with probability at least $1-2/n^2$: 
\begin{itemize}
    \item [\namedlabel{itm:IClemmaii}{(i)}:] $|\Delta(i,j,k)| \leq 2n$
    \item [\namedlabel{itm:IClemmaiv}{(ii)}:] $|\Lambda_k(U_i,U_j)\cap \Delta(i,j,k)|\leq 100\cdot \sqrt{n}\log n$ for $i,j\in [n^{1/4}]$. 
\end{itemize}
\end{lemma}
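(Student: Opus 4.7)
\textbf{Proof plan for Lemma \ref{lemma:IClemma}.}

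The plan is to prove both parts by Chernoff-type concentration, keyed to the two distinct random samplings that feed into \ref{alg:computepairs}: the per-node sampling of $\Lambda(u)$ inside \ref{alg:IdentifyClass} (which determines the class $\alpha$), and the per-pair sampling of $\Lambda_k(U_i,U_j)$ in step \ref{itm:CP2} of \ref{alg:computepairs}. Under the standing assumption that $\alpha=0$ for every node, part \ref{itm:IClemmaii} is a ``reverse Chernoff'' statement recovered from the class-defining sample, while part \ref{itm:IClemmaiv} is a direct upper-tail Chernoff that takes part \ref{itm:IClemmaii} as input.

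For part \ref{itm:IClemmaii}, I would first observe that the quantity $d_{i,j,k}$ defined in step \ref{itm:IC2} of \ref{alg:IdentifyClass} is the size of the intersection $R\cap \Delta(i,j,k)$, and each element of $\Delta(i,j,k)$ is included in $R$ independently with probability at least $10\log n/n$ (by being sampled into $\Lambda(u)$ for at least one endpoint $u$). Hence $\mathbb{E}[d_{i,j,k}]\geq |\Delta(i,j,k)|\cdot 10\log n/n$, and the next step is a contrapositive argument: if $|\Delta(i,j,k)|>2n$, then $\mathbb{E}[d_{i,j,k}]>20\log n$, so a standard multiplicative Chernoff lower tail gives
\[
\Pr[d_{i,j,k}<10\log n]\;\leq\;\exp\!\left(-\tfrac{\mathbb{E}[d_{i,j,k}]}{8}\right)\;\leq\;n^{-c}
\]
for a large constant $c$. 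However, the class-$0$ assumption forces $d_{i,j,k}<10\log n$ deterministically, so $|\Delta(i,j,k)|\leq 2n$ except with probability $n^{-c}$. A union bound across the $n$ triples $(i,j,k)\in\mathbb{V}$ absorbs the failures into $1/n^2$.

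For part \ref{itm:IClemmaiv}, I would condition on part \ref{itm:IClemmaii}. Each pair in $\Delta(i,j,k)$ is independently retained in $\Lambda_k(U_i,U_j)$ with probability $10\log n/\sqrt{n}$, yielding
\[
\mathbb{E}\,|\Lambda_k(U_i,U_j)\cap \Delta(i,j,k)|\;\leq\;2n\cdot \tfrac{10\log n}{\sqrt{n}}\;=\;20\sqrt{n}\log n.
\]
A multiplicative Chernoff upper tail then bounds the probability of exceeding $100\sqrt{n}\log n=5\,\mathbb{E}$ by $(e/5)^{100\sqrt{n}\log n}$, which is much smaller than $n^{-c}$ for any constant $c$. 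A union bound over the $n^{1/2}$ choices of $(i,j)\in[n^{1/4}]^2$ completes the $1/n^2$ estimate.

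The step I expect to require the most care is part \ref{itm:IClemmaii}: one must verify that the sampled set $R$ induces a well-behaved (e.g.\ stochastically dominating) Bernoulli sub-sample of $\Delta(i,j,k)$ with the right independence across pairs, and then translate the Chernoff tail into a contrapositive statement about $|\Delta(i,j,k)|$ under the class-$0$ assumption. Part \ref{itm:IClemmaiv} is then a routine Chernoff and union bound.
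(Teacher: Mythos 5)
Your proposal is correct and follows essentially the same route the paper intends: the paper omits the proof, noting only that it follows from Chernoff and union bounds (deferring to Izumi–Le Gall–Magniez), and your argument supplies exactly that — a reverse Chernoff bound on $d_{i,j,k}$ under the class-$0$ assumption for part \ref{itm:IClemmaii}, then an upper-tail Chernoff on the independent $\Lambda_k$ sampling plus union bounds for part \ref{itm:IClemmaiv}. The only nitpick is that the union bound for part \ref{itm:IClemmaiv} should run over all $n$ triples $(i,j,k)$ (each node draws its own $\Lambda_k(U_i,U_j)$), not just the pairs $(i,j)$, but since the tail is exponentially small in $\sqrt{n}\log n$ this changes nothing.
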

The proofs of these statements are technical but straightforward, making use of Chernoff's bound and union bounds; hence we skip them here. To invoke Theorem \ref{thm: distquantsearches}, we describe a classical procedure first, beginning with an evaluation step, \ref{alg:EvaluationA} implementable in $\tilde{\mathcal{O}}(1)$ rounds.

\begin{mdframed}\namedlabel{alg:EvaluationA}{\texttt{EvaluationA}} \\
Input: Every node $v_{(i,j,k)}$ receives $m$ elements $(u^{i,j,k}_1, \dots, u^{i,j,k}_m)$ of $\mathcal{U}'$\\
Promise: For every node $v_{i,j,k}$ and every $\w \in \mathcal{U}', |L_{\w}^{i,j,k}| \leq 800 \sqrt{n}\log n$. \\
Output: Each node outputs a list of exactly those $u^{i,j,k}_l$ such that there is a negative triangle in $U_i \times U_j \times u^{i,j,k}_l$.

\begin{enumerate}
    \item \label{itm:evalA1} Every node $v_{(i,j,k)}$, for each $r\in \sqrt{n}$ routes the list $L_{\w}^{i,j,t}$ to node $v_{(i,j,t)}$. 
    \item \label{itm:evalA2} Every node $v_{(i,j,k)}$, for each $vu$ it received in step \ref{itm:eval1}, sends the truth value of the inequality 
    \begin{align}
        \min_{w\in U'_k}\{W_{uw} + W_{wv}\} \leq W_{vu} \label{eq:feval}
    \end{align}
    to the node that sent $vu$. 
\end{enumerate}
\end{mdframed}

Each node is the source and destination of up to $800n\log n$ messages in step \ref{itm:evalA1}, meaning that this step can be implemented in $1600\log n$ rounds. The same goes for step \ref{itm:evalA2}, noting that the number of messages is the same, but they need only be single-bit messages (the truth values of the inequalities). Hence, the evaluations for Theorem \ref{thm: distquantsearches} can be implemented in $3200\log n$ rounds. Now, applying the theorem with $X = \mathcal{U}', \beta = 800\sqrt{n}\log n$, noting that then the assumptions of the theorem hold with probability at least $1-2/n^2$ due to Lemma \ref{lemma:IClemma}, implies that step \ref{itm:CP3.2} is implementable in $\tilde{\mathcal{O}}(n^{1/4})$ rounds, with a success probability of at least $1-2/m^2$. 

For the general case in which we do not assume $\alpha = 0$ for all $i,j,k$ in \ref{alg:IdentifyClass}, covered in the appendix, one needs to modify the \ref{alg:EvaluationA} procedure in order to implement load balancing and information duplication to avoid congestion in the simultaneous searches. These details can be found in the appendix, where a new labeling scheme and different evaluation procedure \ref{alg:EvaluationB}, are described for this, or in \cite{IzumiLeGallMag2019_APSP_QuantumDist}.

\subsection{Complexity}\label{sec:complexity}
As noted previously and in \cite{IzumiLeGallMag2019_APSP_QuantumDist}, this APSP scheme uses $\tilde{\mathcal{O}}(n^{1/4})$ rounds. Let us characterize the constants and logarithmic factors involved to assess this algorithm's practical utility. Suppose that in each round, $2\cdot \log n$ qubits can be sent in each message (so that we can send two IDs or one edge with each message), where $n$ is the number of nodes. For simplicity, let's assume that for the edge weights $W \ll n$ and drop $W$.
\begin{enumerate}
    \item APSP with routing tables needs $\log(n)$ distance products with witness matrices.
    \item Computing the $i^{th}$ distance product square for Eq.~\eqref{eq: DistProdSquare} with a witness matrix needs up to $\log(2^i)=i$ calls to \ref{prob:findedges}, since the entries of the matrix being squared may double each iteration. Then APSP and distance products together make $\sum_{i=1}^{\ceil{\log n}} i = \frac{\ceil{\log(n)}(\ceil{\log(n)}+1)}{2}$ calls to \ref{prob:findedges}.
    \item Solving \ref{prob:findedges} needs $\log (\frac{n}{60 \log n})$ calls to \ref{prob:FEWP}, using \ref{alg:FindEdgesViaFEWP}. 
    \item Step \ref{itm:CP1} of \ref{alg:computepairs} needs up to $2\cdot n^{1/4}$ rounds and step \ref{itm:CP2} takes up to $200 \log n$ rounds.
    \item Step \ref{itm:IC1} of \ref{alg:IdentifyClass} needs up to $20\log n$ rounds. 
    \item In step \ref{itm:IC2} of \ref{alg:IdentifyClass}, the $c_{uvw}$ are up to $\frac{1}{2}\log n$ large, and hence $\alpha$ may range up to $\frac{1}{2}\log n$.
    \item Step \ref{itm:eval0} of the \ref{alg:EvaluationB} procedure needs $n^{1/4}$ rounds. Steps \ref{itm:eval1} and \ref{itm:eval2} of the \ref{alg:EvaluationB} (or \ref{alg:EvaluationA}, in the $\alpha = 0$ case) procedure use a total of $3200\log n$ rounds.
    \item \ref{alg:EvaluationB} (or \ref{alg:EvaluationA}) procedure is called up to $\log(n) n^{1/4}$ times for each value of $\alpha$ in step \ref{itm:CP3.2} of \ref{alg:computepairs}. 
\end{enumerate}

Without any improvements, we get the following complexity, using $3n$ in place of $n$ for the terms of steps 3-8 due to corollary \ref{cor: distance product from findedges}: 
\begin{align} 
\nonumber
   \frac{\ceil{\log(n)}(\ceil{\log(n)}+1)}{2}   
\log(\frac{3n}{60 \log 3n}) 
\Big(2(3n)^{1/4} + 
    220 \log 3n + 
     2(3n)^{1/4} + \\ 
     \frac{1}{2}\log 3n \cdot 
     \log 3n \cdot (3n)^{1/4}
     3200(\log 3n) 
     \Big), \label{eq:quantum apsp complexity}
\end{align}

which we will call $f(n)$, so that $f(n) = \mathcal{O}(n^{1/4}\log^6(n))$, with the largest term being about $800 \log^6(n)n^{1/4}$,
and we have dropped $W$ to just consider the case $W \ll n$.
We can solve the problem trivially in the (quantum or classical) CONGEST-CLIQUE within $n\log(W)$ rounds by having each node broadcast its neighbors and the weight on the edge. Let us again drop $W$ for the case $W \ll n$ so that in order for the quantum algorithm to give a real speedup, we will need 
$$
f(n) < n,
$$
which requires $n>10^{18}$ (even with the simpler under-approximation $800 \log^6(n)n^{1/4}$ in place of $f$). Hence, even with some potential improvements, the algorithm is impractical for a large regime of values of $n$ even when compared to the trivial CONGEST-CLIQUE $n$-round strategy.\\
For the algorithm of \cite{IzumiLeGallMag2019_APSP_QuantumDist} computing only APSP \textit{distances}, the first term in \ref{eq:quantum apsp complexity} becomes simply $\ceil{\log n}$, so that when computing only APSP distances the advantage over the trivial strategy begins at roughly $n\approx 10^{16}$.
\begin{remark}
    In light of logarithmic factors commonly being obscured by $\tilde{\mathcal{O}}$ notation, we point out that even an improved algorithm needing only $\log^4(n) n^{1/4}$ would not be practical unless $n>10^7$, for the same reasons. Recall that $n$ is the number of {\em processors} in the distributed network -- tens of millions would be needed to make this algorithm worth implementing instead of the trivial strategy. Practitioners should mind the $\tilde{\mathcal{O}}$ if applications are of interest, since even relatively few logarithmic factors can severely limit practicality of algorithms, and researchers should be encouraged to fully write out the exact complexities of their algorithms for the same reason.  
\end{remark}

\subsubsection{Memory Requirements}
Although in definition \ref{def: QCCM} we make no assumption on the memory capacities of each node, the trivial $n$-round strategy uses at least $2\log(n)|E|^2\cdot\log(W)$ memory at the leader node that solves the problem. For the APSP problem in question, using the Floyd-Warshall algorithm results in memory requirements of $2n^2\log(n)\cdot\log(nW)$ at the leader node. Hence, we may ask whether the quantum APSP algorithm leads to lower memory requirements. The memory requirement is largely characterized by up to $720n^{7/4}\log(n)\log(nW)$ needed in step \ref{itm:eval0} of the \ref{alg:EvaluationB} procedure, which can be found in the appendix. This results in a memory advantage for quantum APSP over the trivial strategy beginning in the regime of $n>1.6\cdot 10^{10}$. 

\subsubsection{Complexity of the Classical Analogue}\label{sec: classical apsp complexity}
For completeness, we provide here a characterization of the complexity of a closely related classical algorithm for APSP with routing tables in the CONGEST-CLIQUE as proposed in \cite{CensorHillel2016_AlgebraicMethodsCCM_fast_APSP} that has complexity $\tilde{\mathcal{O}}(n^{1/3})$. In their framework, the approach to finding witness matrices requires $\mathcal{O}(\log(n)^3)$ calls to the distance product \cite[\S 3.4]{CensorHillel2016_AlgebraicMethodsCCM_fast_APSP}, and similarly to our approach $\log(n)$ distance products are required. Their classical algorithm computes distance products in $\mathcal{O}(n^{1/3})$ rounds, or under $2\log n$ message bandwidth in up to
    
\begin{align}\label{eq:classical apsp complexity}
  20n^{1/3}\log(n)^4 =: g(n)  
\end{align}

rounds, the details of which can be found in the appendix, \S  \ref{sec: appendix: classical apsp complexity details}. Then $g(n) > n$ up until about $n \approx 2.6 \cdot 10^{11}$. As with the quantum APSP, though this algorithm gives the best known asymptotic complexity of $\tilde{\mathcal{O}}(n^{1/3})$ in the classical CONGEST-CLIQUE, it also fails to give any real improvement over the trivial strategy across a very large regime of values of $n$. Consequently, algorithms making use of this APSP algorithm, such as \cite{Saikia_Karmakar2019_SteinerTree_CONGESTCLIQUE} or \cite{fischer2021_DMST}, suffer from the same problem of impracticality. However, the algorithm only requires within $4n^{4/3}\log(n)\log(nW) + n\log(n)\log(nW)$ memory per node, which is less than required for the trivial strategy even for $n\geq4$.

\section{Approximately Optimal Steiner Tree Algorithm}\label{main algo}

\subsection{Algorithm Overview}\label{sec: steiner algorithm overview}
We present a high-level overview of the proposed algorithm to produce approximately optimal Steiner Trees, divided into four steps. 

\begin{description}
    \item[Step \namedlabel{itm:apsp}{1} - APSP and Routing Tables:] Solve the APSP problem as in \cite{IzumiLeGallMag2019_APSP_QuantumDist} and add an efficient routing table scheme via triangle finding in $\tilde{\mathcal{O}}(n^{1/4})$ rounds, with success probability $(1-1/poly(n))$ (this step determines the algorithm's overall success probability). 
    \item[Step \namedlabel{itm:spf}{2} - Shortest-path Forest:] Construct a shortest-path forest (SPF), where each tree consists of exactly one source terminal and the shortest paths to the vertices whose closest terminal is that source terminal.
    This step can be completed in one round and $n$ messages, per \cite[\S 3.1]{Saikia_Karmakar2019_SteinerTree_CONGESTCLIQUE}.
    The messages can be in classical bits.
    \item[Step \namedlabel{itm:weights}{3} - Weight Modifications:] Modify the edge weights depending on whether they belong to a tree (set to 0), connect nodes in the same tree (set to $\infty$), or connect nodes from different trees (set to the shortest path distance between root terminals of the trees that use the edge). This uses one round and $n$ messages. 
    \item[Step \namedlabel{itm:mst}{4} - Minimum Spanning Tree:] Construct a minimum spanning tree (MST) on the modified graph in $\mathcal{O}(1)$ rounds as in \cite{MST_in_O1_CCM_Nowicki2019}, and prune leaves of the MST that do not connect terminal nodes since these are not needed for the Steiner Tree. 
\end{description}
The correctness of the algorithm follows from the correctness of each step together with the analysis of the classical results of \cite{Kou1981_fast_algo_for_ST_}, which uses the same algorithmic steps of constructing a shortest path forest and building it into an approximately optimal Steiner Tree.

\subsection{Shortest Path Forest}\label{sec: SPF}
After the APSP distances and routing tables have been found, we construct a \textit{Shortest Path Forest} (SPF) based on the terminals of the Steiner Tree. 
\begin{definition}{(Shortest Path Forest):}\label{def: SPF}
For a weighted, undirected graph $G = (V, E, W)$ together with a given set of terminal nodes $Z = \{z_1, \dots, z_k\}$, a subgraph $F = (V, E_F, W)$ of $G$ is called a \textit{shortest path forest} if it consists of $|Z|$ disjoint trees $T_z = (V_z, E_z, W)$ satisfying
\begin{itemize}
    \item[i)]$z_i \in T_{z_j}$ if and only if $i =j$, for $i, j \in [k]$.
    \item[ii)] For each $v\in Z_i, d_G(v, z_i) = \min_{z \in Z}d_G(v,z)$, and a shortest path connecting $v$ to $z_i$ in $G$ is contained in $T_{z_i}$
    \item[iii)] The $V_{z_i}$ form a partition of $V$, and $E_{z_1} \cup E_{z_2} \dots \cup E_{z_k}=E_F \subset E$
\end{itemize}
\end{definition}

In other words, an SPF is a forest obtained by gathering, for each node, a shortest path in $G$ connecting it to the closest Steiner terminal node.

For a node $v$ in a tree, we will let $par(v)$ denote the parent node of $v$ in that tree, $s(v)$ the Steiner Terminal in the tree that $v$ will be in, and $ID(v)\in [n]$ the ID of node $v \in V$. Let $\mathcal{Q}(v):= \{z: d_G(v, z) = \min_{z\in Z} d_G(v,z)\}$ be the set of Steiner Terminals closest to node $v$. 
We make use of the following procedure for the SPF: 
\begin{mdframed}\namedlabel{alg:spf}{\texttt{DistributedSPF}}
\item[] Input: For each node $v\in G$, APSP distances and the corresponding routing table $R_v$.
\item[] Output: An SPF distributed among the nodes. 
\begin{itemize}
    \item [\namedlabel{itm:SPF1}{1}:] Each node $v$ sets $s(v) := \text{argmin}_{z\in \mathcal{Q}(v)} ID(z)$ using the APSP information.
    \item [\namedlabel{itm:SPF2}{2}:] Each node $v$ sets $par(v):= R_v(s(v))$, $R_v$ being the routing table of $v$, and sends a message to $par(v)$ to indicate this choice. If $v$ receives such a message from another node $u$, it registers $u$ as its child in the SPF. 
\end{itemize}
\end{mdframed}
Step \ref{itm:SPF1} in \ref{alg:spf} requires no communication since each node already knows the shortest path distances to all other nodes, including the Steiner Terminals, meaning it can be executed locally. Each node $v$ choosing $par(v)$ in step \ref{itm:SPF2} can also be done locally using routing table information, and thus step \ref{itm:SPF2} requires $1$ round of communication of $n-|Z|$ classical messages, since all non-Steiner nodes send one message. 
\begin{claim}
After executing the \ref{alg:spf} procedure, the trees\\
$T_{z_k} = (V_{z_k}, E_{z_k}, W)$ with $V_{z_k} := \{v\in V: s(v) = z_k\}$ and $E_{z_k} := \{v, par(v)\}: v\in V_{z_k}\}$ form an SPF.
\end{claim}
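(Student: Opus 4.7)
\medskip

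\noindent\textbf{Proof plan.} The goal is to verify the three defining properties of an SPF (Definition~\ref{def: SPF}) for the structure produced by \ref{alg:spf}. Properties (i) and (iii) are almost immediate from the tie-broken argmin defining $s(\cdot)$, so the real content lies in (ii): each non-terminal node's parent chain must (a) stay inside the same tree, and (b) trace out an actual shortest path to the root terminal. I will isolate this as a single \emph{closest-terminal invariance} lemma and then finish by induction on the number of hops in the routing tree.

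\medskip

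\noindent\textbf{Warm-up (properties (i) and (iii)).} First I would observe that for any terminal $z_i \in Z$ one has $d_G(z_i, z_i) = 0 < d_G(z_i, z)$ for every other $z \in Z$ (assuming distinct terminals and positive integer edge weights), so $\mathcal{Q}(z_i) = \{z_i\}$ and hence $s(z_i) = z_i$, giving $z_i \in V_{z_i}$ and $z_i \notin V_{z_j}$ for $j \neq i$; this is property (i). For (iii), each node $v \in V$ chooses exactly one value $s(v) \in Z$ via the argmin (the tie-break on ID makes it unique), so the sets $V_{z_k}$ are pairwise disjoint and cover $V$. Since each non-terminal $v$ contributes at most one edge $\{v, par(v)\}$, the union $E_{z_1} \cup \cdots \cup E_{z_k}$ is a subset of $E$.

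\medskip

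\noindent\textbf{Key lemma: closest-terminal invariance.} The central claim is: if $s(v) = z_i$ and $u = par(v) = R_v(z_i)$, then $s(u) = z_i$ as well. I would prove this as follows. By definition of the routing table, $u$ lies on some shortest path from $v$ to $z_i$, so $d_G(u, z_i) = d_G(v, z_i) - W_{vu}$. Suppose for contradiction that some $z_j \in Z$ satisfies either $d_G(u, z_j) < d_G(u, z_i)$, or $d_G(u, z_j) = d_G(u, z_i)$ with $ID(z_j) < ID(z_i)$. Using the triangle inequality $d_G(v, z_j) \leq W_{vu} + d_G(u, z_j)$, the first case yields $d_G(v, z_j) < d_G(v, z_i)$, contradicting $z_i \in \mathcal{Q}(v)$; the second case yields $d_G(v, z_j) \leq d_G(v, z_i)$, so $z_j \in \mathcal{Q}(v)$ with smaller ID, contradicting $s(v) = \mathrm{argmin}_{z \in \mathcal{Q}(v)} ID(z) = z_i$. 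Therefore $s(u) = z_i$, so $u \in V_{z_i}$ and the edge $\{v,u\}$ lies in $E_{z_i}$.

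\medskip

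\noindent\textbf{Finishing (ii) and confirming tree structure.} Iterating the key lemma along the parent chain $v, par(v), par(par(v)), \dots$ shows every node in this chain belongs to $V_{z_i}$, and since each hop strictly decreases the positive quantity $d_G(\cdot, z_i)$ by the edge weight, the chain terminates at $z_i$ after finitely many steps, tracing out a path of total length $d_G(v, z_i)$, i.e., a shortest $v$-to-$z_i$ path in $G$. This path uses only edges of the form $\{w, par(w)\}$ with $w \in V_{z_i}$, so it is contained in $T_{z_i}$, proving (ii). The subgraph $T_{z_i}$ has $|V_{z_i}|$ vertices and exactly $|V_{z_i}| - 1$ edges (one per non-root node in $V_{z_i}$) and is connected by the argument just given, so it is a tree rooted at $z_i$.

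\medskip

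\noindent\textbf{Main obstacle.} The only nontrivial step is the closest-terminal invariance lemma, and within it the subtle issue is tie-breaking: it is not enough that some shortest-path-closest terminal is preserved under taking a parent; the specific tie-broken choice of $s$ must be preserved. The triangle-inequality argument above handles both the strict and tie cases uniformly, which is why I would state it as a single case analysis.
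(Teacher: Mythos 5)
Your proof is correct and follows essentially the same route as the paper's: properties (i) and (iii) are dispatched directly, and (ii) is obtained by showing $par(v)$ stays in the same tree and iterating the parent chain down to $z_k$. The only difference is that the paper merely asserts the invariance $par(v)\in V_{z_k}$, whereas you justify it explicitly via the triangle inequality together with the ID tie-break — a detail the paper leaves implicit.
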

\begin{proof}
i) holds since each Steiner Terminal is closest to itself. iii) is immediate. To see that ii) holds, note that for $v \in V_{z_k}$, $par(v) \in V_{z_k}$ and $\{v, par(v)\}\in E_{z_k}$ as well. Then $par(par(\dots par(v) \dots)) = z_k$ and the entire path to $z_k$ lies in $T_{z_k}$. 
\end{proof}
Hence, after this procedure, we have a distributed SPF across our graph, where each node knows its label, parent, and children of the tree it is in. 

\subsection{Weight Modified MST and Pruning}\label{Weight mod MST}
Finally, we introduce a modification of the edge weights before constructing an MST on that new graph that will be pruned into an approximate Steiner Tree.
These remaining steps stem from a centralized algorithm first proposed by \cite{Kou1981_fast_algo_for_ST_} whose steps can be implemented efficiently in the distributed setting, as in \cite{Saikia_Karmakar2019_SteinerTree_CONGESTCLIQUE}.
We first modify the edge weights as follows:

Partition the edges $E$ into three sets -- \textit{tree edges} $E_F$ as in \ref{def: SPF} that are part of the edge set of the SPF, \textit{intra-tree edges} $E_{IT}$ that are incident on two nodes in the same tree $T_i$ of the SPF, and \textit{inter-tree edges} $E_{XT}$ that are incident on two nodes in different trees of the SPF.
Having each node know which of these its edges belong to can be done in one round by having each node send its neighbors the ID of the terminal it chose as the root of the tree in the SPF that is a part of.
Then the edge weights are modified as follows, denoting the modified weights as $W'$:
\begin{itemize}
    \item[(i):] For $e =(u,v)\in E_T, W'(u,v) := 0$
    \item[(ii):] For $e = (u,v) \in E_{IT}, W'(u,v) := \infty$
    \item[(iii):] For $e = (u,v)\in E_{XT}, W'(u,v) := d(u, Z_u) + W(u,v) + d(v, Z_v)$,
\end{itemize}
noting that $d_G(u, s(u))$ is the shortest-path distance in $G$ from $u$ to its closest Steiner Terminal.

Next, we find a minimum spanning tree on the graph $G' = (V, E, W')$, for which we may implement the classical $\mathcal{O}(1)$ round algorithm proposed by \cite{MST_in_O1_CCM_Nowicki2019}.
On a high level, this constant-round complexity is achieved by sparsification techniques, reducing MST instances to sparse ones, and then solving those efficiently.
We skip the details here and refer the interested reader to \cite{MST_in_O1_CCM_Nowicki2019}. 
After this step, each node knows which of its edges are part of this weight-modified MST, as well as the parent-child relationships in the tree for those edges.

Finally, we prune this MST by removing non-terminal leaf nodes and the corresponding edges.
This is done by each node $v$ sending the ID of its parent in the MST 
to every other node in the graph.
As a result, each node can locally compute the entire MST and then decide whether or not it connects two Steiner Terminals.
If it does, it decides it is part of the Steiner Tree; otherwise, it broadcasts that it is to be pruned.
Each node that has not been pruned then registers the edges connecting it to non-pruned neighbors as part of the Steiner Tree.
This pruning step takes $2$ rounds and up to $n^2+n$ classical messages.

\subsection{Overall Complexity and Correctness}\label{sec: Steiner alg complexity and correctness} 
In algorithm \ref{sec: steiner algorithm overview}, after step \ref{itm:apsp}, steps \ref{itm:spf}, and \ref{itm:weights} can each be done within $2$ rounds. Walking through \cite{MST_in_O1_CCM_Nowicki2019} reveals that the MST for step \ref{itm:mst} can be found in $54$ rounds, with an additional $2$ rounds sufficing for the pruning. 
Hence, the overall complexity remains dominated by Eq.~\eqref{eq:quantum apsp complexity}. 
Hence, the round complexity is $\tilde{\mathcal{O}}(n^{1/4})$, which is faster than any known classical CONGEST-CLIQUE algorithm to produce an approximate Steiner tree of the same approximation ratio. 
However, as a consequence of the full complexity obtained in \S \ref{sec:complexity}, the regime of $n$ in which this algorithm beats the trivial strategy of sending all information to a single node is also $n>10^{18}$. 
For the same reason, the classical algorithm provided in \cite{Saikia_Karmakar2019_SteinerTree_CONGESTCLIQUE} making use of the APSP subroutine from \cite{CensorHillel2016_AlgebraicMethodsCCM_fast_APSP} discussed in \S \ref{sec: classical apsp complexity} has its complexity mostly characterized by Eq.~\eqref{eq:classical apsp complexity}, so that the regime in which it provides an advantage over the trivial strategy lies in $n>10^{11}$. 
Our algorithm's correctness follows from the correctness of each step together with the correctness of the algorithm by \cite{Kou1981_fast_algo_for_ST_} that implements these steps in a classical, centralized manner.

\begin{section}{Directed Minimum Spanning Tree Algorithm}\label{sec: directed MST}
This section will be concerned with establishing Theorem \ref{thm: DMSTalg} for the Directed Minimum Spanning Tree (DMST) problem, in definition \ref{def:DMST}.
Like \cite{fischer2021_DMST}, we follow the algorithmic ideas first proposed by \cite{lovasz1985_DMST}, implementing them in the quantum CONGEST-CLIQUE. Specifically, we will use $\log n$ calls to the APSP and routing tables scheme described in \S \ref{sec: APSP and routing}, so that in our case, we retrieve complexity $\tilde{\mathcal{O}}(n^{1/4})$ and success probability $(1-\frac{1}{poly(n)})^{\log n} = 1-\frac{1}{poly(n)}$.


Before describing the algorithm, we need to establish some preliminaries and terminology for the procedures executed during the algorithm, especially the ideas of shrinking vertices into \textit{super-vertices} and tracking a set $H$ of specific edges as first described in \cite{edmonds1967_DMST}. We use the following language to discuss super-vertices and related objects.
\begin{definition}\label{super-vertex}
A {\em super-vertex set} $\V^*:= \{V^*_1, \dots, V^*_t\}$ for a graph $G = (V, E, W)$ is a partition of $V$, and each $V^*_i$ is called a {\em super-vertex}. We will call a super-vertex {\em simple} if $V^*$ is a singleton. 
The corresponding {\em minor} $G^*:= (\V^*, E^*, W^*)$ is the graph obtained by creating edges $(V^*_i, V^*_j)$ with weight $W^*(V^*_i, V^*_j):= \min \{W(v_i,v_j): v_i\in V^*_i, v_j \in V^*_j\}$.
\end{definition}

Notably, we continue to follow the convention of an edge of weight $\infty$ being equivalent to not having an edge. We will refer to creating a super-vertex $V^*$ as {\em contracting} the vertices in $V^*$ into a super-vertex. 
\subsection{Edmonds' Centralized DMST Algorithm}
We provide a brief overview of the algorithm proposed in \cite{edmonds1967_DMST}, which presents the core ideas of the super-vertex-based approach.
The following algorithm produces a DMST for $G$:
\begin{mdframed}\namedlabel{alg:edmondsDMST}{\texttt{Edmonds DMST Algorithm}}
\item[] Input: An integer-weighted digraph and a root node $r$. 
\item[] Output: A DMST for $G$ rooted at $r$. 
\begin{enumerate}
    \item[\namedlabel{edmondsDMST init}{1.}]Initialize a subgraph $H$ with the same vertex set as G by subtracting for each node the minimum incoming edge weight from all its incoming edges, and selecting exactly one incoming zero-weight edge for each non-root  node of $G$. Set $G_0 = G, H_0 = H, t = 0$.
    \item [\namedlabel{edmondsDMST while}{2.}]WHILE $H_t$ is not a tree:
    \begin{enumerate}
        \item For each cycle of $H$, contract the nodes on that cycle into a super-vertex. Consider all non-contracted nodes as simple super-vertices, and obtain a new graph $G_{t+1}$ as the resulting minor. 
        \item If there is a non-root node of $G_{t+1}$ with no incoming edges, report a failure. Otherwise, obtain a subgraph $H_{t+1}$ by, for each non-root node of $G_{t+1}$, subtracting the minimum incoming edge weight from all its incoming edges, and selecting exactly one incoming zero-weight edge for each non-root, updating $t\leftarrow t+1$. 
    \end{enumerate}
    \item [\namedlabel{edmondsDMST for}{3.}] Let $B_t = H_t$. FOR $k \in (t, t-1, \dots, 1)$:
    \begin{enumerate}
        \item Obtain $B_{k-1}'$ by expanding the non-simple super-vertices of $B_k$ and selecting all but one of the edges for each of the previously contracted cycles of $H_k$ to add to $B_{k-1}$.
    \end{enumerate}
        
    \item [\namedlabel{edmondsDMST return}{4.}]Return $B_0$.
    
\end{enumerate}
\end{mdframed}
Note that the edge weight modifications modify the weight of all directed spanning trees equally, so optimality is unaffected.
In step \ref{edmondsDMST while}, if $H_t$ is a tree, it is an optimal DMST for the current graph $G_t$.
Otherwise, it contains at least one directed cycle, so that indeed step \ref{edmondsDMST while} is valid. Hence, at the beginning of step \ref{edmondsDMST for}, $B_t$ is a DMST for $G_t$. Then the first iteration produces $B_{t-1}$ a DMST for $G_{t-1}$ since only edges of zero weight were added, and $B_{t-1}$ will have no cycles. The same holds for $B_{t-2}, B_{t-3}, \dots, B_0$, for which $B_0$ corresponds to the DMST for the original graph $G$. If the algorithm reports a failure at some point, no spanning tree rooted at $r$ exists for the graph, since a failure is reported only when there is an isolated non-root connected component in $G_{t+1}$.

Note that in iteration $t$ of step \ref{edmondsDMST while}, $H$ has one cycle for each of its connected components that does not contain the root node.
Hence, the drawback of this algorithm is that we may apply up to $\mathcal{O}(n)$ steps of shrinking cycles. This shortcoming is remedied by a more efficient method of selecting how to shrink nodes into super-vertices in \cite{lovasz1985_DMST}, such that only $\log n$ shrinking cycle steps take place.

\subsection{Lovasz' Shrinking Iterations}
We devote this subsection to discuss the shrinking step of \cite{lovasz1985_DMST} that will be repeated $\log n$ times in place of step \ref{edmondsDMST while} of Edmonds' algorithm to obtain Lovasz' DMST algorithm. \newpage
\begin{mdframed}
{\texttt{Lovasz' Shrinking Iteration}}
\namedlabel{alg:LSI}{\texttt{LSI}}\\
Input: A directed, weighted graph $G = (V, E, W)$ and a root node $r\in V$. \\
Output: Either a new graph $G^*$, or a success flag and a DMST $H$ of $G$. 
\begin{itemize}

\item [\namedlabel{itm:lovasz1}{1.}] If there is a non-root node of $G$ with no incoming edges, report a failure. Otherwise, for each non-root node of $G$, subtract the minimum incoming edge weight from all its incoming edges.
Select exactly one incoming zero-weight edge for each non-root node to create a subgraph $H$ of $G$ with those edges. 

\item[\namedlabel{itm:lovasz2}{2}] Find all cycles of $H$, and denote them $H_1, \dots, H_C.$ If $H$ has no cycles, abort the iteration and return (SUCCESS, H). For $j=1, \dots, C$, find the set $V_j$ of nodes that dipaths in $H$ from $H_j$ can reach.

\item[\namedlabel{itm:lovasz3}{3.}] Compute the All-Pairs-Shortest-Path distances in $G$.

\item[\namedlabel{itm:lovasz4}{4.}] For each node $v\in V$, denote $d_j(v):= \min \{d(v, u): u\in H_j\}$. For each $j = 1, \dots, C$, set $\beta_j := \min \{d_j(v): v\in V(G)\setminus \V_j\}$ and $U_j := \{u\in V_j: d_j(u)\leq \beta_j\}$.

\item[\namedlabel{itm:lovasz5}{5.}] Create a minor $G^*$ by contracting each $U_j$ into a super-vertex $U_j^*$, considering all other vertices of $G$ as simple super-vertices $V^*_1, \dots, V^*_k$. For each vertex $N^*$ of $G^*$, let the edge weights in $G^*$ be: 
\begin{align*}
    W^*_{N^* U^*_j} &= \min\{W_{vu}: v\in N^*, u\in U^*_j\} - \beta_j + \min\{d_j(u): u\in U^*_j\}  \\
    &\text{for all } j= 1, \dots, C, \text{ and} \\
W^*_{N^* V^*} &= \min\{W_{v V^*}: v\in N^*\} \\
&\text{for all the simple super-vertices } V^* \text{ of } G^*. 
\end{align*}

\item[6.] Return $G^*$.
\end{itemize}
\end{mdframed}

To summarize these iterations: The minimum-weight incoming edge of each node is selected. 
That weight is subtracted from the weights of every incoming edge to that node, and one of those edges with new weight 0 is selected for each node to create a subgraph $H$. If $H$ is a tree, we are done. Otherwise, we find all cycles of the resulting directed subgraph, then compute APSP and determine the $V_j, U_j,$ and $\beta_j$, which we use to define a new graph with some nodes of the original $G$ contracted into super-vertices. 

The main result for the DMST problem in \cite{lovasz1985_DMST} is that replacing (a) and (b) of step \ref{edmondsDMST while} in the \ref{alg:edmondsDMST}, taking the new $H$ obtained at each iteration to be $H_{t+1}$ and the $G^*$ to be $G_{t+1}$, leads to no more than $\ceil{\log n}$ such shrinking iterations needed before a success is reported.

\subsubsection{Quantum Distributed Implementation}
Our goal is to implement the Lovasz iterations in the quantum distributed setting in $\tilde{\mathcal{O}}(n^{1/4})$ rounds by making use of quantum APSP of \S \ref{sec: APSP and routing}.
In the distributed setting, processor nodes cannot directly be shrunk into super-vertices.
As in \cite{fischer2021_DMST}, we reconcile this issue by representing the super-vertex contractions within the nodes through \textit{soft contractions}.

First, note that a convenient way to track what nodes we want to consider merging into a super-vertex is to keep a mapping $sID: V \rightarrow S$, where $S$ is a set of super-vertex IDs, which we can just take to be the IDs of the original nodes.
We will refer to a pair of $(G, sID)$ as an \textit{annotated graph}. An annotated graph naturally corresponds to some \emph{minor} of $G$, namely, the minor obtained by contracting all vertices sharing a super-vertex ID into a super-vertex. 

\begin{definition}[Soft Contractions]
For an annotated graph $(G, sID)$, a set of active edges $H$, and active component $H_i$ with corresponding weight modifiers $\beta_i$, and a subset $A \subset S$ of super-vertices, the \textit{soft contraction} of $H_i$ in G is the annotated graph $(G^{H_i}, sID')$ obtained by taking $G^{H_i} = (V, E, W')$ with 
\begin{itemize}
    \item $W'_{uv} = 0 $ if $sID(u) = sID(v)$
    \item $W'_{uv} = W_{uv} + dist_{G(A)}(v, C(H_i))-\beta_i$ if $u \in V \setminus A$ and $v \in A$
    \item $W'_{uv} = W_{uv}$ otherwise
\end{itemize}
and updating the mapping $sID$ to $sID'$ defined by $sID'(v) = sID(v),  \forall v \notin A$, $sID'(v) = \min\{sID(u): u\in A\}$.
\end{definition} 

\subsubsection{Quantum Distributed Lovasz' Iteration}
We provide here a quantum distributed implementation of Lovasz' iteration that we will form the core of our DMST algorithm.
\begin{mdframed}
{\texttt{Quantum Distributed Lovasz' Iteration}}
\namedlabel{alg:QDLSI}{{\texttt{QDLSI}}} \\
Input: A directed, weighted, graph $G = (V, E, W)$ with annotations $sID$ and a subgraph $H$. \\
Output: A new graph $G^*$ with annotations $sID'$, or a success flag and a DMST $H$ of $G$.
\begin{itemize} 
    \item[\namedlabel{itm:QLI1}{1}:] Have all nodes learn all edges of $H$, as well as the current super-vertices. 
    \item[\namedlabel{itm:QLI2}{2}:] For each connected component $H_i \subset H$, denote by $C(H_i)$ the cycle of $H_i$. Let $c(H_i)$ be the node with maximal ID in $C(H_i)$, which each node can locally compute. 
    \item[\namedlabel{itm:QLI3}{3}:] Run the quantum algorithm for APSP and routing tables described in \S \ref{sec: APSP and routing} on this graph, or report a failure if it fails.
    \item[\namedlabel{itm:QLI4}{4}:] For each $i$, determine an edge $v_iu_i$, $v_i \notin H_i, u_i \in H_i$ minimizing $\beta_i := W_{v_iu_i} + d_{G}(u_i, c(H_i))$, and broadcast both to all nodes in $H_i$. 
    \item[\namedlabel{itm:QLI5}{5}:] Each node $v_i$ in each $H_i$ applies the following updates $locally$:
    \begin{itemize}
        \item Soft-contract $H_i$ at level $\beta_i$ to soft-contract all super-vertices with distance $\beta_i$ to $C(H_i)$ into one super-vertex, with each contracted node updating its super-vertex ID to $c(H_i)$
        \item add edge $v_i u_i$ to $H$, effectively merging $H_i$ with another active component of $H$
    \end{itemize}
\end{itemize}
\end{mdframed}

We can follow exactly the steps of Lovasz's DMST algorithm, distributedly by replacing steps \ref{itm:QLI2}-\ref{itm:QLI5} of the \ref{alg:LSI} with this quantum-distributed version. The following ensues: 
\begin{lemma}\label{lemma:DMSTalg}
If none of the APSP and routing table subroutines fail, within $\ceil{\log n}$ iterations of the \ref{alg:QDLSI}, $H$ is a single connected component.
\end{lemma}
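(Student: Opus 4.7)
The plan is to reduce to Lovasz's centralized halving argument \cite{lovasz1985_DMST}: first argue that one round of \ref{alg:QDLSI} faithfully implements one round of \ref{alg:LSI}, then invoke the known fact that \ref{alg:LSI} halves the super-vertex count each iteration, and finally iterate $\ceil{\log n}$ times.

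First I would verify the correspondence between \ref{alg:QDLSI} and \ref{alg:LSI}. Under the assumption that the APSP subroutine of step \ref{itm:QLI3} succeeds, every node learns the shortest-path distances in the current minor, so the edge $v_iu_i$ minimizing $W_{v_iu_i}+d_G(u_i,c(H_i))$ selected in step \ref{itm:QLI4} corresponds exactly to the nearest external approach to the cycle $H_i$, and the scalar $\beta_i$ matches the definition in step \ref{itm:lovasz4} of \ref{alg:LSI}. The soft contractions applied in step \ref{itm:QLI5} then relabel precisely those nodes of $U_i := \{u \in V_i : d_i(u) \leq \beta_i\}$ to share the super-vertex ID $c(H_i)$, so that the annotated graph $(G,sID')$ represents exactly the same minor that \ref{alg:LSI} would construct in one iteration.

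Second, I would invoke the counting argument from \cite{lovasz1985_DMST}: one iteration of \ref{alg:LSI} reduces the number of super-vertices by at least a factor of two. Structurally, the subgraph $H$ produced in step \ref{itm:lovasz1} is a functional digraph in the reverse direction (each non-root super-vertex carries a unique incoming selected edge), so each weakly connected component of $H$ other than the one containing $r$ contains exactly one directed cycle of length at least two. The definition of the $U_j$ guarantees that each contracted super-vertex absorbs not only the cycle $H_j$ but also enough of the tree hanging off it -- via the added merge edge $v_iu_i$ with $v_i \notin H_i$ -- that the net super-vertex count drops by at least a factor of two per iteration. Writing $k_t$ for the super-vertex count after iteration $t$, iterating $k_{t+1}\leq k_t/2$ from $k_0 = n$ yields $k_{\ceil{\log n}} \leq 1$, at which point $H$ is trivially a single connected component.

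The main obstacle is the correspondence step above: since \ref{alg:QDLSI} performs soft contractions through annotations rather than physically collapsing vertices, I would need to argue that subsequent calls to APSP and subsequent evaluations of $d_G(u_i, c(H_i))$ really return distances in the current minor and not in the original $G$. This is precisely what the weight-modification rules in the definition of soft contractions achieve (zero weight inside a super-vertex, shortest-path correction on inter-super-vertex edges), but carefully tracking that the correction terms $\min\{d_j(u): u\in U^*_j\}-\beta_j$ compose correctly across repeated iterations is the delicate part. Once this bookkeeping is established, the halving bound of Lovasz transfers verbatim to the distributed setting, and the lemma follows.
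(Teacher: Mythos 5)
Your overall route is the same as the paper's: the paper gives no self-contained proof of this lemma either, but justifies it by noting that \ref{alg:QDLSI} (via soft contractions, following \cite{fischer2021_DMST}) faithfully reproduces one iteration of \ref{alg:LSI}, and then appeals to the bound of \cite{lovasz1985_DMST} that at most $\ceil{\log n}$ shrinking iterations are needed; your correspondence step and your concern about distances composing correctly under repeated soft contractions match that outline.

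However, your quantitative step halves the wrong quantity. It is not true that one iteration reduces the number of \emph{super-vertices} by a factor of two: the contraction only absorbs, for each cycle $H_j$, the cycle itself together with the nodes of $V_j$ at distance at most $\beta_j$, and $\beta_j$ can be small, so a single iteration may eliminate only a handful of super-vertices (e.g.\ a lone $2$-cycle with an external node at distance $0$ after the weight reductions). The potential that actually halves is the number of connected components of $H$ (the active components): every non-root component $H_i$ contains exactly one cycle, and step \ref{itm:QLI5} adds the edge $v_iu_i$ with $v_i\notin H_i$, so every such component is merged with at least one other component; hence the component count at least halves per iteration, and starting from at most $n$ components one reaches a single component within $\ceil{\log n}$ iterations, which is exactly the statement of the lemma. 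With $k_t$ reinterpreted as the number of components of $H_t$ rather than the super-vertex count, your argument goes through; as written, the inequality $k_{t+1}\leq k_t/2$ is false for the quantity you named.
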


\begin{lemma}\label{lemma:APSP prob in DMST}
With probability $(1-\frac{1}{poly(n)})^{\log n}$, all the APSP and routing table subroutines in step \ref{itm:QLI3} succeed.
\end{lemma}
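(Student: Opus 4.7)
My plan is to obtain this bound as a straightforward consequence of Lemma~\ref{lemma:DMSTalg} together with the success guarantee for the quantum APSP with routing tables subroutine of \S\ref{sec: APSP and routing}. The key idea is that the only randomized subroutine invoked in step \ref{itm:QLI3} of \ref{alg:QDLSI} is the APSP algorithm, and this subroutine is called exactly once per iteration of \ref{alg:QDLSI}.

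First, I would appeal to Lemma~\ref{lemma:DMSTalg} to bound the total number of iterations of \ref{alg:QDLSI} by $\ceil{\log n}$. Consequently, step \ref{itm:QLI3} is executed at most $\ceil{\log n}$ times throughout the whole DMST algorithm, so there are at most $\ceil{\log n}$ APSP invocations whose success we must guarantee. Next, I would recall from \S\ref{sec: APSP and routing} (in particular, Theorem~\ref{thm: triangle edge finding} and the success probability analysis of \ref{alg:FindEdgesViaFEWP}, \ref{alg:computepairs}, and \ref{alg:IdentifyClass}) that a single call to the quantum APSP with routing tables subroutine on any $n$-node graph succeeds with probability at least $1-\frac{1}{\mathrm{poly}(n)}$. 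It is important to note here that the graphs on which APSP is invoked in later iterations are soft-contracted versions of $G$ and so still have at most $n$ vertices and integer edge weights, so the per-call success guarantee applies uniformly across all iterations.

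Finally, since the random coins used in distinct APSP calls (i.e., the samplings inside \ref{alg:FindEdgesViaFEWP}, \ref{alg:computepairs}, \ref{alg:IdentifyClass}, and the amplitude-amplification searches of \S\ref{sec: distributed quantum search}) are drawn independently across iterations, the event that all APSP calls succeed is the intersection of at most $\ceil{\log n}$ independent events, each of probability at least $1-\frac{1}{\mathrm{poly}(n)}$. Multiplying these gives success probability at least $\left(1-\frac{1}{\mathrm{poly}(n)}\right)^{\log n}$, which is precisely the claimed bound. (Equivalently, a union bound yields at least $1-\frac{\log n}{\mathrm{poly}(n)} = 1-\frac{1}{\mathrm{poly}(n)}$, still of the claimed form after adjusting the polynomial.)

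The only genuinely delicate point, and what I expect to be the main obstacle, is justifying that the per-iteration APSP success guarantee is preserved on the soft-contracted graphs produced by step \ref{itm:QLI5}. I would address this by observing that soft contraction only rewrites edge weights and super-vertex IDs locally at each node, while leaving the underlying processor network and its size intact; hence the hypotheses of Theorem~\ref{thm: triangle edge finding} and of the results of \S\ref{sec: distance prod via triangles}--\ref{sec:finishing triangles} continue to hold at every iteration, and the product-of-probabilities argument goes through.
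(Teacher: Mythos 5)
Your proposal is correct and follows essentially the same route as the paper, which treats the lemma as an immediate consequence of making at most $\ceil{\log n}$ calls to the APSP-with-routing-tables subroutine (one per \ref{alg:QDLSI} iteration, bounded via Lemma~\ref{lemma:DMSTalg}), each succeeding with probability $1-\frac{1}{poly(n)}$, and multiplying the independent per-call success probabilities. Your added check that the soft-contracted graphs still have $n$ processor nodes with integer weights, so the per-call guarantee applies uniformly, is a reasonable explicit verification of a point the paper leaves implicit.
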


Lemmas \ref{lemma:DMSTalg} and \ref{lemma:APSP prob in DMST} then together imply Theorem \ref{thm: DMSTalg}.
Within $\ceil{\log n}$ iterations, only one active component remains: the root component. This active component can then be expanded to a full DMST on $G$ within $\ceil{\log n}$ rounds, as detailed in \cite[\S 7]{fischer2021_DMST} or the \ref{alg:unpacking} procedure in \S \ref{sec:expanding the dmst} of the appendix.
All messages in the algorithm other than those for computing the APSP in \ref{alg:QDLSI} may be classical.
We provide here the full algorithm for completeness:
\begin{mdframed}
\namedlabel{alg:quantumDMST}{\texttt{Quantum DMST Algorithm}}
\item[] Input: An integer-weighted digraph and a root node $r$. 
\item[] Output: A DMST for $G$ rooted at $r$. 
\begin{enumerate}
    \item[\namedlabel{quantumDMST init}{1.}]Initialize a subgraph $H$ with the same vertex set as G by subtracting for each node the minimum incoming edge weight from all its incoming edges, and selecting exactly one incoming zero-weight edge for each non-root  node of $G$. Set $t = 0, H_0 = H$, and $G_0 = G$ with annotations $sID_0$ to be the identity mapping.
    \item [\namedlabel{quantumDMST while}{2.}]WHILE: $H_t$ is not a single component
    \begin{enumerate}
        \item Run \ref{alg:QDLSI} with inputs $H_t$, $(G_t, sID_t)$ to obtain $H_{t+1}$, $(G_{t+1}, sID{t+1})$ as outputs. Increment $t \leftarrow t+1$.
    \end{enumerate}
    \item [\namedlabel{quantumDMST unpack}{3.}] Let $T_t := H_t$. For $k=t, \dots, 1$: For each super-vertex of the $k^{th}$ iteration of \ref{alg:QDLSI} applied, simultaneously run the \ref{alg:unpacking} procedure with input tree $T_k$ to obtain $T_{k-1}$.
    \item [\namedlabel{quantumDMST return}{4.}] Return $T_0$ as the distributed minimum spanning tree. 
\end{enumerate}
\end{mdframed}

\subsection{Complexity}\label{sec:dmst complexity}
In the \ref{alg:QDLSI}, all steps other than the APSP step \ref{itm:QLI3} of the quantum Lovasz iteration can be implemented within $2$ rounds. In particular, to have all nodes know some tree on G for which each node knows its parent, every node can simply broadcast its parent edge and weight. Since this iteration is used up to $\ceil{\log(n)}$ times and expanding the DMST at the end of the algorithm also takes logarithmically many rounds, we obtain a complexity dominated by the APSP computation of $\tilde{\mathcal{O}}(n^{1/4})$, a better asymptotic rate than any known classical CONGEST-CLIQUE algorithm. However, beyond the $\tilde{\mathcal{O}}$, the complexity is largely characterized by $\log(n) \cdot f(n)$, with $f(n)$ as in Eq.~\eqref{eq:quantum apsp complexity}. In order to have $\log(n) f(n) < n$ to improve upon the trivial strategy of having a single node solve the problem, we then need $n > 10^{21}$. Using the classical APSP from \cite{CensorHillel2016_AlgebraicMethodsCCM_fast_APSP} in place of the quantum APSP of \S \ref{sec: APSP and routing} as done in \cite{fischer2021_DMST} to attain the $\tilde{\mathcal{O}}(n^{1/3})$ complexity in the cCCM, one would need $\log(n)\cdot g(n) < n$ to beat the trivial strategy, with $g$ as in Eq.~\eqref{eq:classical apsp complexity}, or more than $n> 10^{14}$. 

\section{Discussion and Future Work}
We have provided algorithms in the Quantum CONGEST-CLIQUE model for computing approximately optimal Steiner Trees and exact Directed Minimum Spanning trees that use asymptotically fewer rounds than their classical known counterparts.
As Steiner Tree and Minimum Spanning Trees cannot benefit from quantum communication in the CONGEST (non-clique) model, the algorithms reveal how quantum communication can be exploited thanks to  the CONGEST-CLIQUE setting.
A few open questions remain as well.
In particular, there exist many generalizations of the Steiner Tree problem, so these may be a natural starting point to attempt to generalize the results.
A helpful overview of Steiner-type problems can be found in \cite{SteinerWebsite}. 
Regarding the DMST, it may be difficult to generalize a similar approach to closely related problems. 
Since the standard MST can be solved in a (relatively small) constant number of rounds in the classical CONGEST-CLIQUE, no significant quantum speedup is possible. Other interesting MST-type problems are the bounded-degree and minimum-degree spanning tree problems. However, even the bounded-degree decision problem on an unweighted graph, ``does $G$ have a spanning tree of degree at most $k$?'' is NP-complete, unlike the DMST, so we suspect that other techniques would need to be employed. \cite{dinitz_MinDegreeMST} provides a classical distributed approximation algorithm for the problem.
Additionally, we have traced many constants and $\log$ factors throughout our description of the above algorithms, which, as shown, would need to be significantly improved for these and related algorithms to be practical. 
Hence, a natural avenue for future work is to work towards such practical improvements. 
Beyond the scope of the particular algorithms involved, we hope to help the community recognize the severity with which the practicality of algorithms is affected by logarithmic factors that may be obscured by $\tilde{\mathcal{O}}$ notation, and thus encourage fellow researchers to present the full complexity of their algorithms beyond asymptotics. Particularly in a model like CONGEST-CLIQUE, where problems can always be solved trivially in $n$ rounds, these logarithmic factors should clearly not be taken lightly. 
Further, a question of potential practical interest would be to ask the following:
What algorithms solving the discussed problems are the most efficient with respect to rounds needed in the CONGEST-CLIQUE in the regimes of $n$ in which the discussed algorithms are impractical? 

\end{section}

\section*{Acknowledgements}
We are grateful for support from the NASA Ames Research Center, from the NASA SCaN program, and from DARPA under IAA 8839, Annex 130. PK and DB acknowledge support from the NASA Academic Mission Services (contract NNA16BD14C).
The authors thank Ojas Parekh for helpful input and discussions regarding the arborescence problem, Shon Grabbe for ongoing useful discussions, and Filip Maciejewskifor for helpful feedback on the work.

\newpage

{\small
  \bibliographystyle{apacite} 
 \bibliography{quantumcite}
}

\newpage 

\section{Appendix}

\subsection{Proof of claim \ref{claim: witness from dist}}\label{app:witnessfromdist}

For an $n \cross n$ integer matrix $W$, obtain matrices $W'$ and $W''$ by taking $W'_{ij} = n W_{ij}+ j -1$ and $W''_{ji} = nW_{ji}$. Set $D = W'\star W''$. We aim to show that
$\floor*{ {\dfrac{D}{n}} } = W^{2, \star}$ and 
$(D \mod n)+1$ is a witness matrix for $W^{2, \star}.$

\begin{proof}
 \begin{itemize}
    \item[] 
     \item[(i)] We have 
     \begin{align*}
         \floor* { {\dfrac{D}{n}} } _{ij} &= 
         \floor*{\min_{k\in [n]}\left  \{nW_{ik} +k -1 + nW_{kj}  \right \} /n}
         = \floor*{ \min_{k\in [n]} \left\{W_{ik} +W_{kj} + \frac{k-1}{n} \right \}}\\
         &= W^2_{ij} + \floor*{\min_{k\in [n]} \left  \{\frac{k-1}{n}: W_{ik} +W_{kj} = W^2_{ij} \right \}}
         = W^2_{ij}.
     \end{align*}

     \item[(ii)]Next, 
     \begin{align*}
         D_{ij} = nW^2_{ij} + \floor*{\min_{k\in [n]} \left  \{k-1: W_{ik} +W_{kj} = W^2_{ij} \right \}}
     \end{align*}
     gives us 
     $$
     (D \mod n)+ 1 = \floor*{\min_{k\in [n]} \left  \{k-1: W_{ik} +W_{kj} = W^2_{ij} \right \}} +1 = \min_{k\in [n]} \left  \{k: W_{ik} +W_{kj} = W^2_{ij} \right \},
     $$
     which proves the claim. 
 \end{itemize}
\end{proof}

\subsection{The \texorpdfstring{$\alpha$}{alpha} > 0 case}\label{app:alpha nonzero case}
The strategy will be to assign each $v_{(i,j,k)}\in \mathbb{V}$ into classes in accordance with approximately how many negative triangles are in $U_i\times U_j \times U'_k$ before starting the search. 

To assign each node to a class, we use the routine \ref{alg:IdentifyClass} of \cite{IzumiLeGallMag2019_APSP_QuantumDist}, also described in the main text.

The main body of this paper discussed the special case assuming $\alpha = 0$. Hence we now consider the $\alpha > 0$ case.

For each $\alpha \in \mathbb{N}$, let us denote $c_{i,j,k}$ the smallest nonnegative integer satisfying $d_{i,j,k} < 10\cdot 2^c\log n$, and 
\begin{align}
    V_\alpha &:= \{v_{(i,j,k)}: c_{i,j,k} = \alpha\}\\
    V_\alpha[i,j] &:= \{U_k' \in \mathcal{U}': v_{(i,j,k)}\in V_\alpha\} \label{:def Valpha}
\end{align} 
for any $i, j \in [n^{1/4}]$. Notably, $\mathcal{P}(i,j)$ contains at most $\sqrt{n}$ edges, so that $d_{i,j,k}\leq \sqrt{n}$ as well. 
Hence, $c=\frac{1}{2}\log n$ provides an upper bound for the minimum in step \ref{itm:IC2}.
The important immediate consequence is that we only need to consider $V_\alpha$ up to at most $\alpha = \frac{1}{2}\log n$. 


\begin{lemma}\label{lemma:IClemma appendix}
The \ref{alg:IdentifyClass} algorithm and the resulting $V_\alpha$ satisfy the following statements with probability at least $1-2/n$: 
\begin{itemize}
    \item [\namedlabel{itm:IClemma appendixi}{(i)}:]  The algorithm does not abort
    \item [\namedlabel{itm:IClemma appendixii}{(ii)}:] $|\Delta(i,j,k)| \leq 2n$
    \item [\namedlabel{itm:IClemma appendixiii}{(iii)}:] For $\alpha >0$, $v_{(i,j,k)} \in V_\alpha$, we have $2^{\alpha-3}n\leq|\Delta(i,j,k)|\leq2^{\alpha+1}n$. 
    \item [\namedlabel{itm:IClemma appendixiv}{(iv)}:] $|\Lambda_x(i,j)\cap \Delta(i,j,k)|\leq 100\cdot 2^\alpha \sqrt{n}\log n$ for $i,j\in [n^{1/4}]$ and $\alpha \in \N$. 
\end{itemize}
\end{lemma}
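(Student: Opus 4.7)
The plan is to prove each of the four statements via a multiplicative Chernoff bound on an appropriate sum of independent Bernoullis, combined with a union bound over the relevant index sets. Statement (i) is immediate: for each node $u$, the size $|\Lambda(u)|$ is a sum of at most $n$ independent Bernoulli$(10\log n/n)$ variables, so $\mathbb{E}[|\Lambda(u)|] \leq 10\log n$ and the standard Chernoff tail gives $\Pr[|\Lambda(u)| > 20\log n] \leq e^{-10\log n/3} \leq 1/n^{3}$; union-bounding over the $n$ choices of $u$ shows the abort clause triggers with probability at most $1/n^{2}$.

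The workhorse for (ii) and (iii) is the observation that $d_{i,j,k} = \sum_{(u,v)\in\Delta(i,j,k)} \mathbf{1}[(u,v)\in R]$. Because the sampling decisions made in step \ref{itm:IC1} of \ref{alg:IdentifyClass} are mutually independent across all (node, candidate-neighbor) pairs, the indicators $\{\mathbf{1}[(u,v)\in R]\}_{(u,v)\in\Delta(i,j,k)}$ are independent Bernoullis with success probability $p := 1-(1-10\log n/n)^{2} \in [10\log n/n,\, 20\log n/n]$. Hence $\mathbb{E}[d_{i,j,k}] = p\,|\Delta(i,j,k)|$, and a multiplicative Chernoff bound with deviation $1/2$ yields $\Pr[|d_{i,j,k} - \mathbb{E}[d_{i,j,k}]| > \tfrac{1}{2}\mathbb{E}[d_{i,j,k}]] \leq 2e^{-\mathbb{E}[d_{i,j,k}]/12}$. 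Statement (iii) then follows immediately: conditioned on the concentration event, $v_{(i,j,k)}\in V_\alpha$ with $\alpha\geq 1$ means $d_{i,j,k}\in[10\cdot 2^{\alpha-1}\log n,\, 10\cdot 2^{\alpha}\log n)$, so $p|\Delta(i,j,k)|$ lies in the same window up to a factor of $3/2$, and inverting $p$ places $|\Delta(i,j,k)|$ in $[2^{\alpha-3}n,\, 2^{\alpha+1}n]$ with room for the constants. Statement (ii), which I read as the complementary $\alpha = 0$ case, follows from the Chernoff lower tail applied contrapositively: if $|\Delta(i,j,k)| > 2n$, then $\mathbb{E}[d_{i,j,k}] > 20\log n$ and $\Pr[d_{i,j,k} < 10\log n] \leq e^{-20\log n/8} \leq 1/n^{2.5}$, which forces $v_{(i,j,k)} \notin V_0$ with high probability. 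A union bound over the at most $n$ triples $(i,j,k)$ keeps the total failure probability for (ii) and (iii) at $O(1/n)$.

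Statement (iv) is proved analogously. Observe that $|\Lambda_x(U_i, U_j) \cap \Delta(i,j,k)|$ is a sum of $|\Delta(i,j,k)|$ independent Bernoulli$(10\log n/\sqrt{n})$ variables drawn in step \ref{itm:CP2} of \ref{alg:computepairs}, so conditional on the bounds from (ii)/(iii) its mean is at most $2^{\alpha+1}n\cdot 10\log n/\sqrt{n} = 20\cdot 2^{\alpha}\sqrt{n}\log n$. A Chernoff upper tail pushes the sum below $100\cdot 2^{\alpha}\sqrt{n}\log n$ except with probability $1/n^{\Omega(1)}$, and a union bound over the $O(n^{3/4})$ choices of $(i,j,x)$ and the at most $\tfrac{1}{2}\log n + 1$ admissible class levels absorbs polylogarithmic losses into an overall failure probability of $O(1/n)$. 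The principal obstacle is joint independence: one must verify that the step \ref{itm:IC1} sampling in \ref{alg:IdentifyClass} and the pair-sampling in step \ref{itm:CP2} of \ref{alg:computepairs} are independent so the Chernoff tails compose cleanly, and that the constants hidden in the Chernoff exponents are tuned tightly enough that every one of the several union-bound factors contributes at most $1/n$ to the total $2/n$ budget.
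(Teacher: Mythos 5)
Your proposal is correct and takes essentially the same route as the paper, which does not write this proof out at all: it explicitly describes the argument as ``Chernoff's bound and union bounds'' and defers the details to the cited work of Izumi, Le~Gall and Magniez, and your reading of (ii) as the $\alpha=0$ case of the class-based bound (with (iii) covering $\alpha>0$) is the intended one. The only blemishes are cosmetic — e.g., the union bound in (iv) ranges over roughly $n$ triples $(i,j,x)$ rather than $O(n^{3/4})$, and the deviation-$1/2$ concentration step should be phrased contrapositively at the class thresholds when the mean is small — and neither affects the argument, since the relevant per-event failure probabilities are polynomially or exponentially small with ample room for the union bound.
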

This provides an adapted version of lemma \ref{lemma:IClemma} for the $\alpha >0$ case.

The following lemma provides a tool that will allow for "duplication" of information to avoid message congestion in the network in the \ref{alg:EvaluationB} procedure. 

\begin{lemma}\label{lemma:valpha_sizebound}
For all $\alpha \geq 0$ and $i,j \in [n^{1/4}]$,
\begin{align}
    |V_\alpha[i,j]| \leq \dfrac{720\sqrt{n}\log n}{2^\alpha}
\end{align}
\end{lemma}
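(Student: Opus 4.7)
\textbf{Proof proposal for Lemma \ref{lemma:valpha_sizebound}.}

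The plan is to prove the bound by double counting the sum $\sum_k |\Delta(i,j,k)|$ over all $k$ with $U'_k \in V_\alpha[i,j]$, obtaining a lower bound from part \ref{itm:IClemma appendixiii} of Lemma \ref{lemma:IClemma appendix} and an upper bound from the \ref{prob:FEWP} promise.

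First I would handle the $\alpha = 0$ case separately: since $V_\alpha[i,j] \subseteq \mathcal{U}'$ and $|\mathcal{U}'| = \sqrt{n}$, the bound $|V_0[i,j]| \leq \sqrt{n} \leq 720\sqrt{n}\log n$ is immediate for all $n \geq 2$. So assume $\alpha \geq 1$ in what follows.

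For the lower bound: by Lemma \ref{lemma:IClemma appendix}\ref{itm:IClemma appendixiii}, every $k$ with $v_{(i,j,k)} \in V_\alpha$ satisfies $|\Delta(i,j,k)| \geq 2^{\alpha-3} n$. Summing only over these $k$,
\[
\sum_{k : v_{(i,j,k)} \in V_\alpha} |\Delta(i,j,k)| \;\geq\; |V_\alpha[i,j]| \cdot 2^{\alpha-3}\, n.
\]

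For the upper bound, I swap the order of summation:
\[
\sum_k |\Delta(i,j,k)| \;=\; \sum_{(u,v) \in \mathcal{P}(U_i, U_j) \cap S} \bigl|\{k : \exists\, w \in U'_k \text{ with } (u,v,w) \in \Delta^-_G\}\bigr|.
\]
Because the $U'_k$ partition $V$, the inner count is bounded by the total number of witnesses $w$ for the edge $(u,v)$, i.e.\ by $\Gamma(u,v)$. By the \ref{prob:FEWP} promise, $\Gamma(u,v) \leq 90 \log n$ for each $(u,v) \in S$. Since $|\mathcal{P}(U_i, U_j)| \leq |U_i|\cdot|U_j| = n^{3/2}$, this yields
\[
\sum_k |\Delta(i,j,k)| \;\leq\; n^{3/2} \cdot 90 \log n.
\]
Combining the two bounds and solving gives $|V_\alpha[i,j]| \leq \frac{8 \cdot 90\sqrt{n}\log n}{2^\alpha} = \frac{720\sqrt{n}\log n}{2^\alpha}$, as claimed.

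The argument is short and essentially just a double-counting; the only subtle point is the transition from ``witness nodes'' to ``witness $U'_k$-blocks,'' which is clean because the $U'_k$ are disjoint and so each witness $w$ contributes to exactly one $k$. I do not anticipate a serious obstacle, provided Lemma \ref{lemma:IClemma appendix}\ref{itm:IClemma appendixiii} is invoked correctly; in particular the bound holds deterministically once the high-probability event from that lemma occurs.
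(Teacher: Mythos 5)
Your proposal is correct and matches the paper's own argument: the $\alpha=0$ case is dismissed via $|\mathcal{U}'|=\sqrt n$, and for $\alpha\geq 1$ the bound follows from the same double count, with the \ref{prob:FEWP} promise giving $\sum_{k:\,v_{(i,j,k)}\in V_\alpha}|\Delta(i,j,k)|\leq 90n^{3/2}\log n$ and Lemma \ref{lemma:IClemma appendix}\ref{itm:IClemma appendixiii} giving the per-$k$ lower bound $2^{\alpha-3}n$. Your write-up is just a more explicit version (and correctly cites part \ref{itm:IClemma appendixiii}, where the paper's text has a label slip pointing to \ref{itm:IClemma appendixi}).
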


\begin{proof}
The $\alpha = 0$ case is immediate since $|\mathcal{U}'|  = \sqrt{n}$, so consider $\alpha \geq 1$. The ``promise'' in the FEWP subroutine we are in guarantees that for all $(u,v)\in S, \Gamma(u,v) \leq 90 \log n$, so that for any $i,j\in [n^{1/4}]$, each edge in $\mathcal{P}(U_i, U_j)\cap S$ has at most $90\log n$ other nodes forming a negative triangle with it, leading to the inequality 
$$
\sum_{k: v_{(i,j,k)}\in V_\alpha}|\Delta(i,j,k)| \leq 90n^{3/2}\log n. 
$$
Using $|\Delta(i,j,k)| \geq 2^{\alpha-3}n$ from part \ref{itm:IClemma appendixi} of lemma \ref{lemma:IClemma appendix}, the conclusion follows. 
\end{proof}

We now describe the implementation of step \ref{itm:CP3} of the \ref{alg:computepairs} procedure for the $\alpha > 0$ case. 
\begin{mdframed}
\begin{itemize}
    \item [\namedlabel{itm:CP3.1 appendix}{3.1}:] Each node executes the \ref{alg:IdentifyClass} procedure. 
    \item [\namedlabel{itm:CP3.2 appendix}{3.2}:] For each $\alpha$:\\
    For every $l\in [m]$, every node $v_{(i,j,k)}$ executes a quantum search to find whether there is a $U_k'\in V_\alpha[U_i, U_j]$ with some $w\in U_k'$ forming a negative triangle $(u^k_l, v^k_l, w)$ in $G$, and then reports all the pairs $u^k_l v^k_l$ for which such a $U_k'$ was found.
\end{itemize}
\end{mdframed}

The $\alpha =0$ case was described in the main text. We proceed to describe the classical procedure for invoking theorem \ref{thm: distquantsearches} to obtain the speedup for the general $\alpha$ case, as in \cite[\S 5.3.2]{IzumiLeGallMag2019_APSP_QuantumDist}. Some technical precautions must be taken to avoid congestion of messages between nodes. This crucially relies on information duplication to effectively increase bandwidth between nodes. Lemma \ref{lemma:valpha_sizebound} provides a strong bound for the size of each $V_\alpha$. For this duplication of the information stored by the relevant nodes, a new labeling scheme is convenient. Suppose for simplicity that $C_\alpha := 2^\alpha/(720\log n)$ is an integer, and assign each node a label $({\ve u}, \bv, \w, y)\in V_\alpha \times [C_\alpha]$, which is possible due to the bound of lemma \ref{lemma:valpha_sizebound}. 
The following \ref{alg:EvaluationB} implementable in  $\mathcal{O}(\log n)$ rounds (using a slightly sharper complexity analysis than \cite{IzumiLeGallMag2019_APSP_QuantumDist}) can then be used for invoking theorem \ref{thm: distquantsearches}:

\begin{mdframed}\namedlabel{alg:EvaluationB}{\texttt{EvaluationB}}\\
Input: A list $(\w_1^k, \dots, \w_m^k)$ of elements of $V_\alpha[{\ve u}, \bv]$ assigned to each node $k= ({\ve u}, \bv, x)$.\\
Promise: $|L^k_\w|\leq 800\cdot 2^\alpha \sqrt{n}\log n$ for each node $k$ and all $|w \in V_\alpha[{\ve u}, \bv]$. \\
Output: Every node $k = ({\ve u}, \bv, x)$ outputs for each $\ell \in [m]$ whether some $w\in \w^k_l$ forms a negative triangle $\{u^k_\ell, v_\ell^k, w\}$. 

\begin{itemize}
    \item[\namedlabel{itm:eval0}{0}.] Every node $({\ve u}, \bv, \w) \in V_\alpha$ broadcasts the edge information loaded in step \ref{itm:CP1} of \ref{alg:computepairs} to $({\ve u}, \bv, \w, y)$ for each $y\in [C_\alpha]$. 
    \item[\namedlabel{itm:eval1}{1}.]  Every node $({\ve u}, \bv, x)$ splits each $L_\w^k$ into smaller sublists $L_{\w, 1}^k, \dots, L^k_{\w, C_\alpha}$ for each $\w$, with each sublist containing up to $\ceil{|L_\w^k|/C_\alpha} = \ceil{800\cdot 720\sqrt{n}\log^2 n}$ elements,and sends each $L_{\w, y}^k$ to node $({\ve u}, \bv, \w, y)$ along with the relevant edge weights.
    \item[\namedlabel{itm:eval2}{2}.]  Every $({\ve u}, \bv, \w, y)$ node returns the truth value
    $$
    \min_{w\in \w}\{W_{uw}+W_{wv}\} \leq W_{vu}
    $$ to node $k$ for each $uv \in L_{\w, y}^k$ received in step \ref{itm:eval1}.
\end{itemize}
\end{mdframed}
For each value of $\alpha$, we separately solve step \ref{itm:CP3.2 appendix} of the \ref{alg:computepairs}  procedure. Since lemma \ref{lemma:valpha_sizebound} tells us that there are $C_\alpha$ times more nodes not in $V_\alpha$ than there are in $V_\alpha$, every node in $V_\alpha$ can use $C_\alpha$ of those nodes not in $V_\alpha$ to relay messages and effectively increase its message bandwidth, which is exactly what \ref{alg:EvaluationB} takes advantage of. Steps \ref{itm:eval1} and \ref{itm:eval2} of the procedure take up to $2\cdot\ceil{|L_\w^k|}/n \leq 1600\cdot\log n$ rounds, since lists of size $\ceil{|L_\w^k|/C_\alpha}$ are sent to $C_\alpha$ nodes, and the bound on $\alpha$ gives $\ceil{|L_\w^k|} \leq 800n\log(n)$.

\subsubsection{Complexity of the Classical Analogue}\label{sec: appendix: classical apsp complexity details}
This subsection of the appendix serves to provide some supplemental information to \S \ref{sec: classical apsp complexity} discussing the complexity of an algorithm for APSP with routing tables in the CONGEST-CLIQUE as proposed in \cite{CensorHillel2016_AlgebraicMethodsCCM_fast_APSP} that has complexity $\tilde{\mathcal{O}}(n^{1/3})$. Note that \cite[corollary 6]{CensorHillel2016_AlgebraicMethodsCCM_fast_APSP} applied to APSP {\em distance} computations only, whereas the routing table computations are discussed in \cite[\S 3.4]{CensorHillel2016_AlgebraicMethodsCCM_fast_APSP}. As shown there, $\mathcal{O}(\log^3)$ distance products (without witnesses) are needed to compute one distance product with a witness matrix. More precisely:

\begin{enumerate}
    \item Obtaining a witness matrix when witnesses are unique requires $\log(n)$ distance products. 
    \item The procedure for finding witnesses in the general case calls the procedure to find witnesses in the unique witness case $\mathcal{O}(\log^2 n)$ times, or $2\cdot \log^2 n$ times if $c=2$ is deemed as sufficient for the success probability.  
    \item $\log n$ such distance products with witnesses are needed for the APSP algorithm with routing tables.
\end{enumerate}

Then $2\log^4 n$ distance products are computed in total for one distance product with witnesses. The distance product via the semi-ring matrix multiplication algorithm of \cite[\S 2.1]{CensorHillel2016_AlgebraicMethodsCCM_fast_APSP} uses $10n^{1/3}$ rounds ($4n^{1/3}$ for its steps 1 and 2, and $2n^{1/3}$ for step 3) using lemma \ref{lemma: dolev routing lemma}, and hence one obtains the full round complexity of
\begin{align}
  10n^{1/3}\cdot 2\log(n)^4 = g(n).  
\end{align}

\subsection{Expanding the DMST in the Distributed Setting}\label{sec:expanding the dmst}
We handle the expansion of the DMST in the same way as in \cite[\S 7]{fischer2021_DMST}, borrowing much of their discussion for our description here. However, as we have computed APSP distances along the way in place of SSSP, `unpacking' the DMST becomes a bit simpler in our case. \\
    Consider a component $H_i$ in one of the iterations of \ref{alg:QDLSI}, with input graph for the iteration being $G_i$. For each contraction in \ref{alg:QDLSI}, we determined edges $v_iu_i$, $v_i \notin H_i, u_i \in H_i$ minimizing $\beta_i := W_{v_iu_i} + d_{G}(u_i, c(H_i))$ to contract nodes. Recall that what happens in the iteration is that the cycle $c(H_i)$ and all nodes that have distance $\beta_i$ to $c(H_i)$ are contracted into one super-vertex. Denote that super-vertex by $V_{H_i,\beta_i}^*$. Let $G_{i+1}$ denote the graph obtained after this contraction. Then our goal, given a DMST $T_{i+1}$ for $G_{i+1}$, is to recover $G_i$ along with a DMST $T_i$ for $G_i$. We make use of the following \ref{alg:unpacking} operation of \cite[\S 7]{fischer2021_DMST}: 

\begin{mdframed}\namedlabel{alg:unpacking}{{\texttt{Unpacking}}}
\item[] Input: A digraph $G_{i+1}$ with a DMST $T_{i+1}$ with root $r$, a set of edges $H_i$ as in \ref{alg:quantumDMST}, a node $V^*_{H_i,\beta_i}$ of $G_{I+1}$ marked as a super-vertex, a set $c(H_i)$ of the nodes contracted into it, and $G_i$ the graph before contracting $c(H_i)$. 
\item[] Output: A DMST $T_i$ for $G_i$ rooted at $r$. 
\begin{itemize}
    \item[\namedlabel{itm:unpacking1}{1}:] For any $v_1, v_2 \notin V_{H_i,\beta_i}^*$, let edge $v_1v_2 \in T_{i}$ iff $v_1v_2 \in T_{i+1}$. 
    \item[\namedlabel{itm:unpacking2}{2}:] For $uV_{H_i,\beta_i}^* \in T_{i+1}$, which exists since $T_{i+1}$ is a DMST for $G_{i+1}$, denote the edge \\ $uv^*:= argmin_{uv: v\in V_{H_i,\beta_i}^*, u: \exists uv \in G_{i+1}}W_{vu} + d_{G}(u, c(H_i))$. Add $uv^*$ and the shortest path $\zeta$ connecting $v^*$ to $c(H_i)$ to $T_i$.
    \item[\namedlabel{itm:unpacking3}{3}:] For any edge $V_{H_i,\beta_i}^*u\in T_{i+1}$ outgoing from the contracted super-vertex, add the edge\\ $argmin_{vu: v\in V_{H_i,\beta_i}^*} W_{vu}^{G_i}$ to $T_{i}$. 
    \item[\namedlabel{itm:unpacking4}{4}:] Add all edges $H_i \setminus \delta^{in}(\zeta)$ to $T_i$, where $\delta^{in}(\zeta)$ denotes all edges incoming on $\zeta$.
\end{itemize}
\end{mdframed}
At the end of this procedure, $T_i$ is a DMST for $G_i$ \cite[lemma 8]{fischer2021_DMST}. We now describe how it can be implemented distributedly, needing only classical messages and information. For every contracted super-vertex, the following steps can be implemented at the same time, as will become clear in how the steps are executed for the nodes of each contracted super-vertex. Let us focus on unpacking one super-vertex $V_{H_i,\beta_i}^*$. Each node knows its neighbors in $G_i$, and every node's super-vertex ID in $G_{i}$ and $G_{i+1}$, since each node stores this information before the initial contraction to $G_{i+1}$ in \ref{alg:QDLSI} happens. Hence, step \ref{itm:unpacking1} can be done locally at each node without any communication. Step \ref{itm:unpacking2} can be done by first having each node $v\in V_{H_i,\beta_i}^*$ send $\beta(u,v)$ to the other nodes in $V_{H_i,\beta_i}^*$, in one round, and then having each node of $V_{H_i,\beta_i}^*$ send to $v^*$ the routing table entry corresponding to its shortest path to $c(H_i)$ in $G_i$, also in one round (the nodes have already computed this information in \ref{alg:QDLSI}. Then $v^*$ notifies the nodes that are part of $\zeta$, which can then add the appropriate edge to $T_i$, needing yet another round, so that step \ref{itm:unpacking2} can be done in three rounds of classical communication only. Step \ref{itm:unpacking3} is handled similarly. For the outgoing edge, each node in  $V_{H_i,\beta_i}^*$ sends $W_{vu}^{G_i}$ to the other nodes in $ V_{H_i,\beta_i}^*$ so that the appropriate edge to add to $T_i$ can be determined (in case of a tie, the node with smaller ID can be the one to add the edge), so this can be done in one round. For step \ref{itm:unpacking4}, every node in $\zeta$ notifies its neighbors that it is in $\zeta$, after which every node can determine which edges to add to $T_i$. For the unpacking of $V_{H_i,\beta_i}^*$, the information and communication for implementing its unpacking is contained in the nodes of $V_{H_i,\beta_i}^*$, so we can indeed unpack all vertices synchronously to obtain $G_{i}$ even when multiple super-vertices were contracted to get $G_{i+1}$. Hence, one layer of unpacking using this procedure can be implemented in $5$ rounds (making use of the APSP and routing table information computed earlier before the contractions in \ref{alg:QDLSI}). Since there are at most $\ceil{\log n}$ contraction steps, the unpacking procedure can be implemented in $5\cdot \ceil{\log n}$ rounds. 

\subsection{Information access}\label{app:information access}

In remark \ref{remark: information storage}, we mention that it suffices for all information regarding the input graph to be stored classically, with quantum access to it.
Here, we expand on what we mean by that and refer the interested reader to \cite{Booth2021_QuantumCP} for further details.

While our algorithms use quantum subroutines, the problem instances and their solutions are encoded as classical information.
The required quantum access refers to the ability to access the classical data so that computation in superposition of this data is possible.
For instance, in the standard (non-distributed) Grover search algorithm, with a problem instance described by a function $g: X \rightarrow {0, 1}$, we need the ability to apply the unitary $U_w |x\rangle = (-1)^{g(x)}|x\rangle$ to an $N$-qubit superposition state $|s\rangle = \frac{1}{\sqrt{N}}\sum_{x=0}^{N-1}|x\rangle$.
This unitary is also referred to as the ``oracle'', and a call to it as a ``query''.
If we wish to use the distributed Grover search in example \ref{ex: grover example}, in which the node $u$ leading the search tries to determine whether each edge $uv$ incident on it is part of a triangle in graph $G$, the unitary that node $v$ must be able to evaluate is the indicator function of its neighborhood, and $u$ must be able to apply the Grover diffusion unitary restricted to its neighborhood.
Then after initializing the $N$-qubit equal superposition, nodes $u$ and $v$ can send a register of qubits back and forth between each other, with $v$ evaluating the unitary corresponding to the indicator of its neighborhood and $u$ applying the Grover diffusion operator restricted to its neighborhood.
The same ideas transfer over to a distributed quantum implementation of the \ref{alg:EvaluationA} (or \ref{alg:EvaluationB}) procedure.
There, instead of evaluating unitaries corresponding to indicators, in step \ref{itm:eval2}, each node $v_{(i,j,k)}$ evaluates the unitary corresponding to the truth values of inequality \ref{eq:feval} for the evaluation steps.
That information is then returned to the node that sent it, which can then apply the appropriate Grover diffusion operator. 

In general, quantum random access memory (QRAM) is the data structure that allows queries to the oracle. We can use circuit QRAM in our protocols or could make use of special-purpose hardware QRAM if it were to be realized. This choice does not affect the number of rounds of communication but would affect the efficiency of computation at each node.
A main component of the distributed algorithms discussed in this work is quantum query access for each node to its list of edges and their weights in some graph $G$.
This information is stored in memory, and the QRAM implementing the query to retrieve it can be called in time $\mathcal{O}(\log n)$, resulting in a limited overhead for our algorithms.
This retrieval of information takes place {\em locally} at each node; hence, this overhead does not add to the round complexity of our algorithms in the CONGEST-CLIQUE setting. We refer to \cite{Giovannetti_2008_QRAM} for more details on QRAM.
\end{document}